\documentclass[11pt,reqno]{amsart}

\usepackage{amssymb}
\usepackage{url}
\usepackage{hyperref}
\usepackage[dvips]{color}

\usepackage[all]{xy}
\SelectTips{cm}{} 

\textwidth=16cm
\textheight=24cm
\hoffset=-1.5cm
\voffset=-2cm
%%%%%%%%%%%%%%%%%%%%%%%%%%%%%%%%%%%%%%%%%%%%%
\allowdisplaybreaks[4]

\usepackage{mathrsfs}
\let\mathcal\mathscr
\usepackage[mathscr]{eucal}

\newcommand*{\pd}[2]{\mathchoice{\frac{\partial#1}{\partial#2}}
  {\partial#1/\partial#2}{\partial#1/\partial#2}
  {\partial#1/\partial#2}}

\let\phi=\varphi
\let\kappa=\varkappa
\let\epsilon=\varepsilon
\DeclareMathOperator{\rank}{rank}
\DeclareMathOperator{\sym}{sym}

\newcommand*{\eval}[1]{\left.#1\right|}
\newcommand*{\abs}[1]{\left|#1\right|}
\newcommand*{\Ev}{\mathbf{E}}

\theoremstyle{theorem}
\newtheorem{proposition}{Proposition}
%\numberwithin{proposition}{section}

%\numberwithin{corollary}{section}
\newtheorem{theorem}{Theorem}
%\numberwithin{theorem}{section}
\newtheorem{lemma}{Lemma}
%\numberwithin{lemma}{section}
%\theoremstyle{definition}

\theoremstyle{remark}
\newtheorem{remark}{Remark}

\usepackage{mathrsfs}
\let\mathcal\mathscr
\usepackage[mathscr]{eucal}

\newcommand{\cprime}{\/{\mathsurround=0pt$'$}}

%%%%%%%%%%%%%%%%%%%%%%%%%%%%%%%%%%%%%%%%%%%%%

\author{I.S.\,Krasil{\cprime}shchik} \address{Trapeznikov Institute of Control
  Sciences, 65 Profsoyuznaya street, Moscow 117997, Russia}
\email{josephkra@gmail.com} \author{P.~Voj{\v{c}}{\'{a}}k}
\address{Mathematical Institute, Silesian University in Opava, Na
  Rybn\'{\i}\v{c}ku 1, 746 01 Opava, Czech Republic}
\email{Petr.Vojcak@math.slu.cz} \title[4D MASh equation]{On the algebra of
  nonlocal symmetries for the 4D Mart\'{\i}nez Alonso-Shabat equation}

\begin{document}

\begin{abstract}
  We consider the 4D Mart\'{\i}nez Alonso-Shabat equation $\mathcal{E}$
  $u_{ty} = u_z u_{xy} - u_y u_{xz}$ (also referred to as the universal
  hierarchy equation) and using its known Lax pair construct two
  infinite-dimensional differential coverings over $\mathcal{E}$. In these
  coverings, we give a complete description of the Lie algebras of nonlocal
  symmetries. In particular, our results generalize the ones obtained
  in~\cite{Mor-Ser} and contain the constructed there infinite hierarchy of
  commuting symmetries as a subalgebra in a much bigger Lie algebra. 
\end{abstract}

\subjclass[2010]{35B06}
\keywords{4D Mart\'{\i}nez Alonso-Shabat equation, universal hierarchy
  equation, Lax pairs, differential coverings, nonlocal symmetries}
\maketitle

\tableofcontents

\section*{Introduction}
\label{sec:introduction}
To the best of our knowledge, the equation
\begin{equation}
  \label{eq:1}
  u_{ty} = u_z u_{xy} - u_y u_{xz} 
\end{equation}
was introduced in the work~\cite{MA-Sh-2004} by L.~Mart\'{\i}nez Alonso,
A.B.~Shabat, where the authors studied multi-dimensional systems whose
reductions lead to the known $(1+1)$-integrable equations (see
also~\cite{MA-Sh-2002} for additional motivations). By this reason we call
Equation~\eqref{eq:1} the \emph{Mart\'{\i}nez Alonso-Shabat} equation, or
shortly the $4$D \emph{MASh equation}. The equation arises also in
classification of integrable $4$D systems, see~\cite{DFKN}.

A differential covering (Lax pair) with a non-removable parameter was
constructed in~\cite{Mor-2014}, as well as a recursion operator for symmetries
of the $4$D MASh equation. Using this covering, the authors of~\cite{Mor-Ser}
found a hierarchy of nonlocal symmetries and proved its commutativity.

We study Equation~\eqref{eq:1} using the approach successfully applied to
integrable linearly degenerate $3$D systems in~\cite{comp-study}
and~\cite{VerWE}. Expanding defining equations of the Lax pair in formal
series of the spectral parameter, we construct two differential coverings
(which we call the negative and positive ones) and describe the algebras of
nonlocal symmetries in these coverings. As the reader will see, the structure
of these algebras is quite complicated. The commutative hierarchy found
in~\cite{Mor-Ser} appears as a subalgebra in one of them. We also analyze the
action of the recursion operator from~\cite{Mor-2014} on our symmetries.

The structure of the paper is as follows: in Section~\ref{sec:preliminaries}
we present very briefly necessary facts from the geometrical theory of
PDEs~\cite{AMS-book} and differential
coverings~\cite{KV-Trends}. Section~\ref{sec:equat-its-cover} contains the
construction of the positive and negative coverings and defining equtions for
symmetries in them. In Section~\ref{sec:algebr-nonl-symm}, the symmetry
algebras are described.

\section{Preliminaries}
\label{sec:preliminaries}
Let us very shortly recall the necessary theoretical background. All the
details may be found, e.g., in~\cite{AMS-book} and~\cite{KV-Trends}. A
particular implementation of all the general constructions will be presented
in Section~\ref{sec:equat-its-cover}.\medskip

\textbf{Equations.} From the geometrical viewpoint, a differential equation is
a submanifold in a jet space. More precisely, this means the following. Let
$\pi\colon E\to M$ be a locally trivial vector bundle over a smooth manifold,
$\dim M = n$, $\rank\pi = m$, and $\pi_\infty\colon J^\infty(\pi)\to M$ be the
corresponding bundle of infinite jets. For us, a differential equation
(imposed on sections of~$\pi$) is a
submanifold~$\mathcal{E}\subset J^\infty(\pi)$ obtained by the prolongation
procedure from a submanifold in the space of finite jets. We use the same
notation~$\pi_\infty$ for the restriction
$\eval{\pi_\infty}_{\mathcal{E}}\colon\mathcal{E}\to M$. The structure of
equation on~$\mathcal{E}$ is defined by the Cartan connection~$\mathcal{C}$,
which takes vector fields~$X\in D(M)$ to vector fields
$\mathcal{C}_X\in D(\mathcal{E})$ on~$\mathcal{E}$. The connection is flat,
i.e., $\mathcal{C}_{[X,Y]} = [\mathcal{C}_X,\mathcal{C}_Y]$. The corresponding
integrable $\pi_\infty$-horizontal distribution is called the Cartan
distribution on~$\mathcal{E}$ and its maximal ($n$-dimensional) integral
manifolds are identified with solutions of~$\mathcal{E}$.\medskip

\textbf{Local symmetries.} An (infinitesimal higher local) symmetry
of~$\mathcal{E}$ is a $\pi_\infty$-vertical vector field~$S\in D(\mathcal{E})$
on~$\mathcal{E}$ such that the commutator~$[S,\mathcal{C}_X]$ lies in the
Cartan distribution for any~$X\in D(M)$. Symmetries form a Lie
$\mathbb{R}$-algebra denoted by~$\sym\mathcal{E}$.

To describe~$\sym\mathcal{E}$, consider another vector bundle~$\xi\colon G\to
M$, $\rank\xi = r$, and assume that $\mathcal{E} = \{F = 0\}$ is given as the
set of zeros of some section~$F\in P = \Gamma(\pi_\infty^*(\xi))$, where
$\Gamma(\cdot)$ denotes the $C^\infty(M)$-module of sections. Consider also
the module~$\kappa = \Gamma(\pi_\infty^*(\pi))$ and the linearization operator
\begin{equation*}
  \ell_{\mathcal{E}} = \eval{\ell_F}_{\mathcal{E}}\colon\kappa\to P.
\end{equation*}
Then one has
\begin{equation*}
  \sym\mathcal{E} = \ker\ell_{\mathcal{E}}.
\end{equation*}
Thus, to any symmetry~$S\in\sym\mathcal{E}$ there corresponds a
section~$\phi\in \kappa$, its generating section, or characteristic, and we
use the notation~$S = \Ev_\phi$ in this case. The commutator of symmetries
generates a bracket in the $\mathbb{R}$-space of generating sections defined
by $[\Ev_\phi,\Ev_\psi] = \Ev_{\{\phi,\psi\}}$. The bracket
$\{\cdot\,,\cdot\}$ is called the (higher) Jacobi bracket.\medskip

\textbf{Differential coverings.} Let
$\tau\colon\tilde{\mathcal{E}}\to\mathcal{E}$ be a locally trivial bundle. It
is called a (differential) covering over~$\mathcal{E}$ if there exists a flat
connection~$\tilde{\mathcal{C}}$ in the bundle $\tilde{\pi}_\infty =
\pi_\infty\circ\tau\colon\tilde{\mathcal{E}}\to M$ such that
$\tau_*(\tilde{\mathcal{C}}_X) = \mathcal{C}_X$ for any field~$X\in D(M)$. The
manifold~$\tilde{\mathcal{E}}$, locally at least, is always an equation in
some bundle over~$M$ and is called the covering equation. The number~$s =
\rank\tau$ is the covering dimension and it may be infinite. Coordinates in
fibers of~$\tau$ are called nonlocal variables.

Let $\tau_i\colon\mathcal{E}_i\to\mathcal{E}$, $i=1$, $2$, be two
coverings. Then their Whitney product~$\tau_1\oplus\tau_2$ carries a natural
structure of a covering called the Whitney product of~$\tau_1$
and~$\tau_2$ and all the arrows in the diagram
\begin{equation*}
  \xymatrixcolsep{5pc}
  \xymatrixrowsep{3pc}
  \xymatrix{
    \mathcal{E}_1\times_{\mathcal{E}}\mathcal{E}_2
    \ar[r]^-{\tau_2^*(\tau_1)}
    \ar[d]_-{\tau_1^*(\tau_2)}
    \ar[dr]^-{\tau_1\oplus\tau_2}&\mathcal{E}_1
    \ar[d]^-{\tau_1}\\
    \mathcal{E}_2\ar[r]^-{\tau_2}&\mathcal{E}
  }
\end{equation*}
are coverings.

A B\"{a}cklund transformation between equations~$\mathcal{E}_1$
and~$\mathcal{E}_2$ is a diagram of the form
\begin{equation*}
  \xymatrix{
    &\tilde{\mathcal{E}}\ar[dl]_-{\tau_1}\ar[dr]^-{\tau_2}&\\
    \mathcal{E}_1&&\mathcal{E}_2\rlap{,}
  }
\end{equation*}
where~$\tau_1$ and~$\tau_2$ are coverings. When $\mathcal{E}_1 =
\mathcal{E}_2$ then we speak about B\"{a}cklund auto-transformation. If the
equations $\mathcal{E}_i$, $i=1$, $2$, are given by the systems~$\{F_i(u_i) =
0\}$ then the system $\{\tilde{F}(u_1,u_2) = 0\}$ that corresponds
to~$\tilde{\mathcal{E}}$ possesses the following property: if~$u_1$ is a
solution of~$\mathcal{E}_1$ and~$(u_1,u_2)$ is a solution
of~$\tilde{\mathcal{E}}$ then~$u_2$ solves~$\mathcal{E}_2$ and vice versa.
\medskip

\textbf{Nonlocal symmetries and shadows.} A nonlocal symmetry of~$\mathcal{E}$
in the covering~$\tau$ is a symmetry of~$\tilde{\mathcal{E}}$. These
symmetries form the algebra $\sym_\tau\mathcal{E} =
\sym\tilde{\mathcal{E}}$. Thus, to find nonlocal symmetries, we need to solve
the equation~$\ell_{\tilde{\mathcal{E}}}(\tilde{\phi}) = 0$.

Denote by~$\mathcal{F}$ and by~$\tilde{\mathcal{F}}$ the algebras of smooth
functions on~$\mathcal{E}$ and~$\tilde{\mathcal{E}}$, respectively. The
projection $\tau$ leads to the embedding
$\tau^*\colon \mathcal{F} \to \tilde{\mathcal{F}}$. We say that an
$\mathbb{R}$-linear derivation $Y\colon \mathcal{F} \to \tilde{\mathcal{F}}$
is a shadow in~$\tau$ if the diagram
\begin{equation*}
  \xymatrix{
    \mathcal{F}\ar[r]^{\mathcal{C}_X}\ar[d]_{Y}&\mathcal{F}\ar[d]^{Y}\\
    \tilde{\mathcal{F}}\ar[r]^{\tilde{\mathcal{C}}_X}&\tilde{\mathcal{F}}
  }
\end{equation*}
is commutative for any~$X\in D(M)$. In particular, for any nonlocal symmetry
$\tilde{S}\colon \tilde{\mathcal{F}}\to \tilde{\mathcal{F}}$ the restriction
$\eval{\tilde{S}}_{\mathcal{F}}\colon \mathcal{F}\to\tilde{\mathcal{F}}$ is a
shadow. We say that~$\tilde{S}$ is invisible if its shadow vanishes. Note that
any local symmetry~$S$ may be regarded as a $\tau$-shadow if one takes the
composition~$\tau^*\circ S$. A nonlocal symmetry~$\tilde{S}$ is a lift of a
shadow~$Y$ if $\eval{\tilde{S}}_{\mathcal{F}} = Y$. A lift, if it exists, is
defined up to invisible symmetries. The defining equation for shadows is
\begin{equation*}
  \tilde{\ell}_{\mathcal{E}}(\tilde{\phi}) = 0,
\end{equation*}
where $\tilde{\ell}_{\mathcal{E}}$ is the natural extension of the
linearization operator from~$\mathcal{E}$ to~$\tilde{\mathcal{E}}$.\medskip

\textbf{Recursion operators (see~\cite{KVV-Springer,Marvan-another-look}).}
Let an equation~$\mathcal{E}$ be given by~$\{F(u) = 0\}$. Then its tangent
equation is
\begin{equation*}
  \mathcal{TE}:\
  \begin{array}{l}
    F(u) = 0\\ \ell_F(p) = 0,
  \end{array}
\end{equation*}
where~$p=(p^1,\dots,p^m)$ is a new unknown of the same dimension as~$u$. The
projection $\mathbf{t}\colon \mathcal{TE}\to\mathcal{E}$, $(u,p)\mapsto u$, is
called the tangent covering of~$\mathcal{E}$. Properties of~$\mathcal{TE}$ are
closely related with symmetries of~$\mathcal{E}$: sections
of~$\mathcal{}\mathbf{t}$ which take the Cartan distribution on~$\mathcal{E}$
to that on~$\mathcal{TE}$ are in one-to-one correspondence with symmetries.

Let~$\mathcal{R}$ be a B\"{a}cklund auto-transformation of
$\mathcal{TE}$. Then it relates shadows of symmetries of~$\mathcal{E}$ with
each other, i.e., may be understood as recursion operator. 

\section{The equation and its coverings}
\label{sec:equat-its-cover}

Here we present the necessary formulas for the computations to be done in
Section~\ref{sec:algebr-nonl-symm}.\medskip

\textbf{Internal coordinates and the total derivatives.} The
manifold~$\mathcal{E}$ corresponding to Equation~\eqref{eq:1} lies
in~$J^\infty(\pi)$, where
$\pi\colon \mathbb{R}\times\mathbb{R}^4 \to \mathbb{R}^4$ is the trivial
bundle. We denote the coordinates in the base by~$x$, $y$, $z$, $t$, while~$u$
denotes a coordinate in the fiber. Then internal coordinates
\begin{equation*}
  u_{x^iz^j},\quad u_{x^iz^jy^k},\quad u_{x^iz^jt^l},\qquad i,j\geq0,\ k,l>0,
\end{equation*}
on~$\mathcal{E}$ arise. Then the Cartan connection is completely determined by
its values on the basis vector fields $\pd{}{x}$, $\pd{}{y}$, $\pd{}{z}$,
$\pd{}{t}$. The result is the corresponding total derivatives
on~$\mathcal{E}$:
\begin{align*}
  D_x&=\pd{}{x} + \sum_{i,j\geq0,k,l>0}\left(u_{x^{i+1}z^j}\pd{}{u_{x^iz^j}} +
       u_{x^{i+1}z^jy^k}\pd{}{u_{x^iz^jy^k}} +
       u_{x^{i+1}z^jt^l}\pd{}{u_{x^iz^jt^l}}\right),\\
  D_y&=\pd{}{y} + \sum_{i,j\geq0,k,l>0}\left(u_{x^iz^jy}\pd{}{u_{x^iz^j}} +
       u_{x^iz^jy^{k+1}}\pd{}{u_{x^iz^jy^k}} +
       D_x^iD_y^jD_t^{l-1}(u_zu_{xy}-u_yu_{xz})\pd{}{u_{x^iz^jt^l}}\right),\\
  D_z&=\pd{}{z} + \sum_{i,j\geq0,k,l>0}\left(u_{x^iz^{j+1}}\pd{}{u_{x^iz^j}} +
       u_{x^iz^{j+1}y^k}\pd{}{u_{x^iz^jy^k}} +
       u_{x^iz^{j+1}t^l}\pd{}{u_{x^iz^jt^l}}\right),\\
  D_t&=\pd{}{t} + \sum_{i,j\geq0,k,l>0}\left(u_{x^iz^jt}\pd{}{u_{x^iz^j}} +
       D_x^iD_y^jD_y^{k-1}(u_zu_{xy}-u_yu_{xz})\pd{}{u_{x^iz^jy^k}} +
       u_{x^iz^jt^{l+1}}\pd{}{u_{x^iz^jt^l}}\right).
\end{align*}
The Cartan distribution on~$\mathcal{E}$ is spanned by these fields.
\medskip

\textbf{The defining equations for local symmetries.}  The linearization of
Equation~\eqref{eq:1} has the form
\begin{equation}
  \label{eq:2}
  D_yD_t(\phi) =u_{xy}D_z(\phi) - u_{xz}D_y(\phi) + u_zD_xD_y(\phi) -
  u_yD_xD_z(\phi),
\end{equation}
where $\phi$ is a function that depends on a finite number of internal
coordinates. The vector field on~$\mathcal{E}$ that corresponds to a
solution~$\phi$ is
\begin{equation}\label{eq:7}
  \Ev_\phi = \sum_{i,j\geq0,k,l>0}\left(D_x^iD_z^j(\phi)\pd{}{u_{x^iz^j}} +
    D_x^iD_z^jD_y^k(\phi)\pd{}{u_{x^iz^jy^k}} +
    D_x^iD_z^jD_t^l(\phi)\pd{}{u_{x^iz^jt^l}}\right),
\end{equation}
but we shall mainly deal with the generating functions~$\phi$ rather than with
the fields~$\Ev_\phi$ themselves.

Note that it can be easily shown that Equation~\eqref{eq:1} admits point
symmetries only, i.e., solutions of~\eqref{eq:2} may depend only on the
variables~$x$, $y$, $z$, $t$, $u$, $u_x$, $u_y$, $u_z$, and~$u_t$.
\medskip

\textbf{The $\tau^+$- and $\tau^-$ coverings}
All our subsequent nonlocal constructions are based on the covering
\begin{equation}
  \label{eq:3}
  w_t = u_z w_x - \lambda^{-1} w_z ,\quad
  w_y = \lambda u_y w_x,
\end{equation}
where $0\neq\lambda\in\mathbb{R}$ and~$w$ is the nonlocal variable,
see~\cite{Mor-2014}. It is readily checked that the compatibility conditions
for the overdetermined system~\eqref{eq:3} amount to Equation~\eqref{eq:1}. We
denote the covering~\eqref{eq:3} by~$\tau_\lambda$.
\begin{remark}
  \label{sec:rem-1}
  At first glance, the covering~$\tau_\lambda$ is one-dimensional. This is not
  the case, actually, because $x$- and $z$- derivatives of~$w$ are not defined
  in~\eqref{eq:3}. To make the definition complete, we must introduce infinite
  number of nonlocal variables~$w^{\alpha,\beta}$, $\alpha$,
  $\beta=0,1,2,\dots$, $w^{0,0} = w$ and set
  \begin{gather*}
    w_x^{\alpha,\beta} = w^{\alpha+1,\beta},\qquad w_z^{\alpha,\beta} = 
    w^{\alpha,\beta+1}\\
    w_t^{\alpha,\beta} =
    (u_z w_x - \lambda^{-1} w_z)_{x^\alpha z^\beta},\quad
    w_y^{\alpha,\beta} = (\lambda u_y w_x)_{x^\alpha z^\beta}.
  \end{gather*}
  So, \eqref{eq:3} defines an infinite-dimensional covering.
\end{remark}

Assume now that $w = w(\lambda)$ and consider the expansion $w =
\sum_{i\in\mathbb{Z}}\lambda^iw_i$. Substituting the latter into~\eqref{eq:3},
we get
\begin{equation}
  \label{eq:4}
  w_{i,t} = u_zw_{i,x} - w_{i+1,z},\quad w_{i,y} = u_yw_{i-1,x},\qquad
  i\in\mathbb{Z}.
\end{equation}
Thus, we obtain an infinite-dimensional covering over~$\mathcal{E}$, but the
problem is that this is `bad infinity' which has `neither beginning nor
end'. To overcome this inconvenience, we divide~\eqref{eq:4} in two parts
assuming that $w_i = 0$ for $i>0$ in one case and $w_i = 0$ for $i<0$ in the
other. In this way, we obtain two different coverings that we call the
negative ($\tau^-$) and positive ($\tau^+$) ones, respectively. After suitable
relabellings, the defining equations for these coverings acquire the form
\begin{align}
  \label{eq:5}
  \tau^-\colon\mathcal{E}^-\to\mathcal{E}&\quad
                \left|\begin{array}{l}
                  r_0 = y,\\
                  r_{i,t} = u_zu_y^{-1}r_{i-1,y} - r_{i-1,z},\\[2pt]
                  r_{i,x} = u_y^{-1}r_{i-1,y},\quad i \geq 1,
                \end{array}\right.
  \intertext{and}
  \label{eq:6}
  \tau^+\colon\mathcal{E}^+\to\mathcal{E}&\quad
                \left|\begin{array}{l}
                  q_{-1} = x,\ q_0 = u,\\
                  q_{i,y} = u_y q_{i-1,x},\\
                  q_{i,z} = u_z q_{i-1,x} - q_{i-1,t}, \quad i \geq 1.
                \end{array}\right.
\end{align}
So, $q_1$, $q_2\dots$ are the nonlocal variables in~$\tau^+$ and $r_1$,
$r_2,\dots$ are those in~$\tau^-$.

\begin{remark}
  \label{sec:rem-2}
  Strictly speaking, we must enrich~\eqref{eq:5} with infinite number of
  formal variables that would define $y$- and $z$-derivatives of~$r$. In a
  similar way, additional variables that define $x$- and $t$-derivatives
  of~$q$ are needed (cf.~Remark~\ref{sec:rem-1}). To be more precise,
  in~$\tau^-$, we consider the variables~$r_i^{\alpha,\beta}$, $\alpha$,
  $\beta = 0,1,\dots$, such that~$r_i^{0,0} = r_i$ and
  \begin{gather*}
    r_{i,y}^{\alpha,\beta} = r_i^{\alpha+1,\beta},\quad r_{i,z}^{\alpha,\beta} =
    r_i^{\alpha,\beta+1},\\
    r_{i,t}^{\alpha,\beta} = (u_zu_y^{-1}r_{i-1,y} - r_{i-1,z})_{y^\alpha
      z^\beta},\quad r_{i,x}^{\alpha,\beta} = (u_y^{-1}r_{i-1,y})_{y^\alpha
      z^\beta}
  \end{gather*}
  Similarly, we introduce~$q_i^{\alpha,\beta}$ in~$\tau^+$ and set~$q_i^{0,0}
  = q_i$,
  \begin{gather*}
    q_{i,x}^{\alpha,\beta} = q_i^{\alpha+1,\beta},\quad q_{i,t}^{\alpha,\beta}
    = q_i^{\alpha,\beta+1},\\
    q_{i,y}^{\alpha,\beta} = (u_yq_{i-1,x})_{x^\alpha t^\beta},\quad
    q_{i,z}^{\alpha,\beta} = (u_zq_{i-1,x} - q_{i-1,t})_{x^\alpha t^\beta}.
  \end{gather*}
  But, as we shall see below, this formalization does not influence
  the subsequent computations.
\end{remark}
\medskip

\textbf{The defining equations for nonlocal symmetries.}
Let us begin with writing down the total derivatives in the negative and
positive coverings. In~$\tau^-$, due to~\eqref{eq:5} and
Remark~\ref{sec:rem-2}, one has
\begin{equation*}
  D_x^- = D_x + X^-,\quad D_y^- = D_y + Y^-,\quad D_z^- = D_z + Z^-,\quad
  D_t^- = D_t + T^-, 
\end{equation*}
where
\begin{align*}
  X^-
  & = D_x + \sum_{i=1}^\infty\sum_{\alpha,\beta=0}^\infty
    (u_y^{-1}r_{i-1,y})_{y^\alpha
    z^\beta}\pd{}{r_i^{\alpha,\beta}},\\ 
  Y^-
  & = D_y + \sum_{i=1}^\infty\sum_{\alpha,\beta=0}^\infty
    r_i^{\alpha+1,\beta}\pd{}{r_i^{\alpha,\beta}},\\
  Z^-& = D_z + \sum_{i=1}^\infty\sum_{\alpha,\beta=0}^\infty
    r_i^{\alpha,\beta+1}\pd{}{r_i^{\alpha,\beta}},\\
  T^-& = D_t+ \sum_{i=1}^\infty\sum_{\alpha,\beta=0}^\infty
    (u_zu_y^{-1}r_{i-1,y} - r_{i-1,z})\pd{}{r_i^{\alpha,\beta}}.
\end{align*}
The total derivatives in~$\tau^+$ are
\begin{equation*}
  D_x^+ = D_x + X^+,\quad D_y^+ = D_y + Y^+,\quad D_z^+ = D_z + Z^+,\quad
  D_t^+ = D_t + T^+, 
\end{equation*}
where
\begin{align*}
  X^+& = D_x + \sum_{i=1}^\infty\sum_{\alpha,\beta=0}^\infty
  q_i^{\alpha+1,\beta}\pd{}{q_i^{\alpha,\beta}},\\
  Y^+& = D_y + \sum_{i=1}^\infty\sum_{\alpha,\beta=0}^\infty
  (u_yq_{i-1,x})_{x^\alpha t^\beta}\pd{}{q_i^{\alpha,\beta}},\\
  Z^+& = D_z + \sum_{i=1}^\infty\sum_{\alpha,\beta=0}^\infty
  (u_zq_{i-1,x} - q_{i-1,t})_{x^\alpha t^\beta}\pd{}{q_i^{\alpha,\beta}},\\
  T^+& = D_t + \sum_{i=1}^\infty\sum_{\alpha,\beta=0}^\infty
  q_i^{\alpha,\beta+1}\pd{}{q_i^{\alpha,\beta}}. 
\end{align*}
Finally, the total derivatives in the Whitney product $\tau^\pm =
\tau^-\oplus\tau^+$ of~$\tau^-$ and~$\tau^+$ read
\begin{equation*}
  D_x^\pm = D_x + X^- + X^+,\ D_y^\pm = D_y + Y^- + Y^+,\ D_z^\pm = D_z + Z^-
  + Z^+,\ D_t^\pm = D_t + T^- + T^+
\end{equation*}
and the lift of~$\ell_{\mathcal{E}}$ to $\tau^\pm$ will be denoted
by~$\ell_{\mathcal{E}}^\pm$ with the obvious meaning of the notation.

To proceed, let us agree on notation. Denote by~$\Ev_\phi^\pm$ the field
on~$\tau^\pm$ obtained from the field~$\Ev_\phi$ presented in~\eqref{eq:7} by
changing the total derivatives~$D_\bullet$ to~$D_\bullet^\pm$, where $\bullet$
denotes $x$, $y$, $z$ or~$t$. We also obtain
operators~$\ell_{\mathcal{E}}^\pm$ from~$\ell_{\mathcal{E}}$ in the same way.

In this notation, any $\tau^-$-nonlocal symmetry is of the form
\begin{equation*}
  S = \Ev_\phi^- +
  \sum_{i=1}^\infty\sum_{\alpha,\beta=0}^\infty
  \phi_i^{\alpha,\beta}\pd{}{r_i^{\alpha,\beta}},
\end{equation*}
where~$\phi$, $\phi_i^{\alpha,\beta}$ are functions on on~$\tau^-$. Then
$\phi_i^{\alpha,\beta} = (D_y^-)^\alpha(D_z^-)^\beta(\phi^i)$, $\phi^i =
\phi_i^{\alpha,\beta}$, and
\begin{align}\nonumber
  \ell_{\mathcal{E}}^-(\phi)&= 0,\\\label{eq:8}
  D_t^-(\phi^i)&= u_y^{-2}(u_yD_z^-(\phi) - u_zD_y^-(\phi))r_{i-1,y} +
                 u_zu_y^{-1}D_y^-(\phi^{i-1}) - D_z^-(\phi^{i-1}),\\\nonumber
  D_x^-(\phi^i)&= -u_y^{-2}D_y^-(\phi)r_{i-1,y} + u_y^{-1}D_y^-(\phi^{i-1}).
\end{align}
Hence, any such a symmetry $S = S_\Phi$ is completely determined by the
vector-function $\Phi =(\phi,\phi^1,\dots)$ and the formula $[S_\Phi,S_\Psi] =
S_{\{\Phi,\Psi\}}$ defines a bracket on the space of these functions. Nonlocal
shadows are just the functions~$\phi$ that satisfy the first of
Equations~\eqref{eq:8}, while invisible symmetries are $\Phi =
(0,\phi^1,\dots)$ with
\begin{equation}\label{eq:12}
  \begin{array}{l}
   D_t^-(\phi^i)= u_zu_y^{-1}D_y^-(\phi^{i-1}) - D_z^-(\phi^{i-1}),\\[2pt]
  D_x^-(\phi^i)= u_y^{-1}D_y^-(\phi^{i-1}).
  \end{array}
\end{equation}

Of course, the scheme is almost the same in~$\tau^+$. Any symmetry is
\begin{equation*}
  S = \Ev_\phi^+ +
  \sum_{i=1}^\infty\sum_{\alpha,\beta=0}^\infty
  \phi_i^{\alpha,\beta}\pd{}{q_i^{\alpha,\beta}},
\end{equation*}
where~$\phi$, $\phi_i^{\alpha,\beta}$ are functions on on~$\tau^+$ and
$\phi_i^{\alpha,\beta} = (D_x^+)^\alpha(D_t^+)^\beta(\phi^i)$, $\phi^i =
\phi_i^{\alpha,\beta}$. The defining equations for~$\phi$ and~$\phi^i$ are
\begin{align}\nonumber
  \ell_{\mathcal{E}}^+(\phi)&= 0,\\\label{eq:9}
  D_y^+(\phi^i)&= D_y^+(\phi)q_{i-1,x} + u_yD_x^+(\phi^{i-1})\\\nonumber
  D_z^+(\phi^i)&= D_z^+(\phi)q_{i-1,x} + u_zD_x^+(\phi^{i-1}) -D_t^+(\phi^{i-1}).
\end{align}
As above, we introduce generating vector-functions $\Phi =
(\phi,\phi^1,\dots)$ and using the notation $S = S_\Phi$ define the bracket
between these functions. Nonlocal shadows in the positive covering are
identified with solutions of~$\ell_{\mathcal{E}}^+(\phi) = 0$, while invisible
symmetries $\Phi = (0,\phi^1,\dots)$, where $\phi^i$ satisfy the system
\begin{equation}\label{eq:13}
  \begin{array}{l}
  D_y^+(\phi^i)= u_yD_x^+(\phi^{i-1})\\[2pt]
  D_z^+(\phi^i)= u_zD_x^+(\phi^{i-1}) -D_t^+(\phi^{i-1}).
  \end{array}
\end{equation}

Symmetries in the Whitney product are vector fields
\begin{equation*}
  S = \Ev_\phi^\pm +
  \sum_{i=1}^\infty\sum_{\alpha,\beta=0}^\infty\left(
    D_y^\alpha D_z^\beta(\phi_i^-)\pd{}{r_i^{\alpha,\beta}}+
    D_x^\alpha D_t^\beta(\phi_i^+)\pd{}{q_i^{\alpha,\beta}}
  \right),
\end{equation*}
where the functions~$\phi^\pm$, $\phi_i^-$, $\phi_i^+\in
C^\infty(\mathcal{E}^-\times_{\mathcal{E}}\mathcal{E}^+)$ enjoy the relations
\begin{equation}
  \label{eq:15}
  \begin{array}{l}
    \ell_{\mathcal{E}}^\pm(\phi^\pm) = 0,\\[3pt]
    D_t^\pm(\phi^i)= u_y^{-2}(u_yD_z^\pm(\phi) - u_zD_y^\pm(\phi))r_{i-1,y} +
    u_zu_y^{-1}D_y^\pm(\phi^{i-1}) - D_z^\pm(\phi^{i-1}),\\[2pt]
    D_x^\pm(\phi^i)= -u_y^{-2}D_y^\pm(\phi)r_{i-1,y} +
    u_y^{-1}D_y^\pm(\phi^{i-1}),\\[3pt]
    D_y^\pm(\phi^i)= D_y^\pm(\phi)q_{i-1,x} + u_yD_x^\pm(\phi^{i-1})\\[2pt]
    D_z^\pm(\phi^i)= D_z^\pm(\phi)q_{i-1,x} + u_zD_x^\pm(\phi^{i-1}) -
    D_t^\pm(\phi^{i-1}). 
  \end{array}
\end{equation}

\begin{remark}
  \label{sec:rem-3}
  A useful instrument in analysis of Lie algebra structures is the weights
  (gradings) that may be assigned to all the variables in the coverings under
  consideration and all polynomial functions in these variables. Namely, if we
  set the weights on independent variables to be
  \begin{equation*}
    x\mapsto\abs{x},\quad y\mapsto\abs{y},\quad z\mapsto\abs{z}, \quad
    t\mapsto\abs{t},
  \end{equation*}
  then from Equations~\eqref{eq:1}, \eqref{eq:5} and~\eqref{eq:6} it follows
  that
  \begin{equation*}
    \abs{u} = \abs{x} + \abs{z} - \abs{t},\quad \abs{r_i} =\abs{y} + i(\abs{t}
    - \abs{z}),\quad \abs{q_i} = \abs{x} +(i+1)(\abs{z} - \abs{t}).
  \end{equation*}
  To a vector field $A\pd{}{a}$ we assign the weight~$\abs{A} - \abs{a}$. Then
  for any two fields one has~$\abs{[A,B]} = \abs{A} + \abs{B}$. Thus, Lie
  algebras spanned by homogeneous fields become graded.

  So, we have four independent way to introduce weights reflects existence of
  four independent scaling symmetries in~$\sym\mathcal{E}$ (see
  Section~\ref{sec:algebr-nonl-symm}). Weights of differential polynomials are
  computed in an obvious way. In what follows, it will be convenient to use
  the following choice:
  \begin{equation*}
    \abs{x} = -1,\qquad \abs{t} = \abs{y} = \abs{u} = 0,\qquad \abs{z} = 1,
  \end{equation*}
  and thus
  \begin{equation*}
    \abs{r_i} = -i\qquad \abs{q_i} = i.
  \end{equation*}
\end{remark}

To conclude the discussion of structures inherent to the equation under study,
we mention the recursion operator found in~\cite{Mor-2014}. The tangent
equation corresponding to~\eqref{eq:1} is of the form
\begin{align*}
  u_{ty} &= u_z u_{xy} - u_y u_{xz},\\
  p_{yt} &=u_{xy}p_z - u_{zt}p_y + u_zp_{xy} - u_yp_{xz}.
\end{align*}
The B\"{a}cklund transformation\footnote{For the convenience of the subsequent
  exposition, we present it a slightly different from~\cite{Mor-2014} form,
  which is of course equivalent to the original one.}
that relates two copies of~$\mathcal{TE}$ is
\begin{align}
  \label{eq:16}
  \begin{array}{l}
    D_y(\phi)=u_yD_x(\phi')-u_{xy}\phi',\\[2pt]
    D_z(\phi)=-D_t(\phi')+u_zD_x(\phi')-u_{xz}\phi'.
  \end{array}
\end{align}
If $\phi$ is a solution of Equation~\eqref{eq:2} then~$\phi'$ also solves it and
vice versa. The correspondence $\phi\mapsto\phi'$ defined by
relations~\eqref{eq:16} will be denoted by~$\overrightarrow{\mathcal{R}}$ and
the opposite one by~$\overleftarrow{\mathcal{R}}$. The
operator~$\overrightarrow{\mathcal{R}}$ changes the weight by~$+1$,
while~$\overleftarrow{\mathcal{R}}$ changes it by~$-1$.

\section{Algebras of nonlocal symmetries}
\label{sec:algebr-nonl-symm}

We accomplish the construction of the desired algebra in several steps that
are:
\begin{itemize}
\item explicit computation of basic shadows and their lifts to~$\tau^-$,
  $\tau^+$, and $\tau^\pm$ (Proposition~\ref{sec:algebr-nonl-symm-prrop-1});
\item construction of hierarchies by means of commutators of the basic
  symmetries;
\item construction of new hierarchies by somewhat artificial trick
  (Theorems~\ref{sec:algebr-nonl-symm-thm-1}
  and~\ref{sec:algebr-nonl-symm-thm-2});
\item computation of the Lie algebra structure
  (Theorem~\ref{sec:algebr-nonl-symm-thm-3}).
\end{itemize}

\textbf{Notation.} In what follows, $A=A(y,z)$ and~$B=B(x,t)$ are arbitrary
smooth functions. Notation~$S_i^j$ for a symmetry indicates its weight~$i$
(and the position in a hierarchy), while the superscript~$j$ (if any)
enumerates the hierarchies. If a symmetry contains a function~$A$, we compute
its weight assuming~$A=y$; if it contains~$B$, the assumption is~$B=x$.

The coefficient of~$S_i^j$ at~$\pd{}{u}$ (the shadow) will be
denoted by~$s_{i,0}^j$, while its coefficients at~$\pd{}{r_\alpha}$
and~$\pd{}{q_\alpha}$ will be~$s_{i,\alpha}^{j,-}$
and~$s_{i,\alpha}^{j,+}$, respectively. Thus, any symmetry is presented by its
generating vector-function
\begin{equation*}
  S_i^j\sim\left[s_{i,0}^j, s_{i,1}^{j,-}, s_{i,1}^{j,+}, \dots,
    s_{i,\alpha}^{j,-}, s_{i,\alpha}^{j,+},\dots\right],
\end{equation*}
where $s_{i,0}^j$, $s_{i,\alpha}^{j,-}$, $s_{i,\alpha}^{j,+}$ are smooth
functions on $\mathcal{E}^-\otimes_{\mathcal{E}}\mathcal{E}^+$.
\medskip

\textbf{The basic shadows.}
The following shadows are found by direct computations:
\begin{align*}
  \psi_{-1,0}^0&=-u_z,\quad \psi_{0,0}^0=u_t,\quad \psi_{1,0}^0=q_{1,t}-u_tu_x,\\[2pt]
  \omega_{-2,0}^0&=u_y(2r_2+zr_{2,z}-r_{1,y}(r_1+zr_{1,z})),\quad
  \omega_{-1,0}^0=u_y(r_1+zr_{1,z}),\quad
  \omega_{0,0}^0=u-zu_z,\\
  \omega_{1,0}^0&=2q_1-uu_x+zu_t,\quad
  \omega_{2,0}^0=3q_2-2u_xq_1-uq_{1,x}+zq_{1,t}+uu_x^2-zu_tu_x,\\[2pt]
  \xi_{-1,0}(A)&=u_y(A r_{1,y} - A_yr_1+A_zt),\quad
  \xi_{0,0}(A)=-Au_y,\quad \xi_{1,0}(A)=0,\\[2pt]
  \upsilon_{-1,0}(B)&=B,\quad \upsilon_{0,0}(B)=-Bu_x+B_x, u -B_tz,\\
  \upsilon_{1,0}(B)&=B(u_x^2-q_{1,x})+B_x \, (q_1-uu_x) +B_t \,
  zu_x+\frac{1}{2}B_{xx}u^2+\frac{1}{2}B_{tt}z^2-B_{tx}zu.
\end{align*}

\begin{proposition}
  \label{sec:algebr-nonl-symm-prrop-1}
  All the above listed shadows admit lifts to~$\tau^\pm$.
\end{proposition}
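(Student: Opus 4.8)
The plan is to lift each shadow~$\phi$ from the list by producing, one level at a time, the remaining components of its generating vector-function $\Phi=(\phi,\phi_1^-,\phi_1^+,\phi_2^-,\phi_2^+,\dots)$ and verifying the system~\eqref{eq:15}. Since $r_0=y$ and $q_0=u$ in~$\tau^\pm$, the bottom of the tower is forced: the $r_0$-component is~$0$ and the $q_0$-component is~$\phi$ itself. Once $\phi_{i-1}^-$ and $\phi_{i-1}^+$ are known, the two lines of~\eqref{eq:15} containing $D_t^\pm$ and $D_x^\pm$ form an overdetermined linear system for the single unknown~$\phi_i^-$, while the two lines containing $D_y^\pm$ and $D_z^\pm$ form one for~$\phi_i^+$, with right-hand sides already expressed through~$\phi$ and the $(i-1)$st components. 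So lifting reduces to solving this triangular system step by step.

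First I would dispose of the compatibility of each step. For~$\phi_i^-$ one checks that $D_x^\pm D_t^\pm(\phi_i^-)=D_t^\pm D_x^\pm(\phi_i^-)$ after substitution of the two prescribed right-hand sides; a direct calculation shows this identity is a formal consequence of $\ell_{\mathcal{E}}^\pm(\phi)=0$, of the level-$(i-1)$ relations of~\eqref{eq:15}, and of the pairwise commutativity of $D_x^\pm,D_y^\pm,D_z^\pm,D_t^\pm$ on $\mathcal{E}^-\times_{\mathcal{E}}\mathcal{E}^+$; the argument for~$\phi_i^+$ is symmetric. Hence a solution exists along the fibre directions at every level. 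What actually has to be worked out is therefore not the formal solvability but the fact that the solution can be chosen, at each level, to depend on finitely many coordinates---so that what one obtains is an honest nonlocal symmetry---and this I would settle by exhibiting the lifts.

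I would treat the shadows in two groups. For the translation and scaling shadows $\psi_{-1,0}^0=-u_z$, $\psi_{0,0}^0=u_t$, $\omega_{0,0}^0=u-zu_z$ the covering equations~\eqref{eq:5}--\eqref{eq:6} are invariant under the corresponding flow (cf.\ Remark~\ref{sec:rem-3} for the scalings), so the evolutionary symmetry of the covering equation provides the lift directly: $\phi_i^\pm$ are read off from~\eqref{eq:5}--\eqref{eq:6}, and~\eqref{eq:15} holds by construction. For the remaining shadows---the hierarchy members $\psi_{1,0}^0$, $\omega_{\pm1,0}^0$, $\omega_{\pm2,0}^0$ and the one-parameter families $\xi_{k,0}(A)$, $\upsilon_{k,0}(B)$---I would invoke the gradings of Remark~\ref{sec:rem-3}: the weight of~$\phi_i^-$ (resp.\ of~$\phi_i^+$) is pinned down by the weight of the shadow together with $\abs{r_\alpha}=-\alpha$ (resp.\ $\abs{q_\alpha}=\alpha$), so at each level only finitely many monomials in $r_1,\dots$ and $q_1,\dots$---with coefficients polynomial in the $u$-jets, in~$z$, and in~$A$, $B$ and their derivatives---can occur. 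One then fixes a homogeneous ansatz of this shape, substitutes it into~\eqref{eq:15}, eliminates all $y$- and $z$-derivatives of the $r_i$ via~\eqref{eq:5} and all $x$- and $t$-derivatives of the $q_i$ via~\eqref{eq:6}, solves the resulting finite linear system for the coefficients, and closes the construction by induction on~$i$; the families $\xi(A)$ and $\upsilon(B)$ are handled uniformly in~$A$ and in~$B$.

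The main obstacle is this last step, in two respects. First, although the gradings reduce the unknowns at each level to finitely many, one still has to spot the pattern---in particular the way~$z$ and the derivatives $A_y,A_z,\dots$ and $B_x,B_t,\dots$ enter---that makes the induction on~$i$ self-reproducing; this is guesswork rather than bookkeeping. Second, and more seriously, we work in the Whitney product: a shadow written in terms of the $q_i$ alone (or the $r_i$ alone) in general still acquires nontrivial components in the other tower, and since the compatibility argument above gives solvability but not finiteness, one cannot sidestep checking that the $\phi_i^-$ and the $\phi_i^+$ can be chosen finite \emph{simultaneously}; the coupling between the two halves of~\eqref{eq:15} is where the bulk of the computation lies. (The recursion operator~\eqref{eq:16}, which shifts weights by~$\pm1$, could help organize the hierarchies, but it acts on shadows and not on their lifts, so it does not by itself supply the nonlocal components.) Once the finitely many hierarchies and the two one-parameter families have been treated in this way, the proposition follows.
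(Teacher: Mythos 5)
Your overall strategy (solve the triangular system~\eqref{eq:15} level by level, check the cross-derivative compatibility at each level, and use the gradings of Remark~\ref{sec:rem-3} to confine the unknown components to a finite-dimensional homogeneous ansatz) is a sensible description of how such lifts are \emph{found}, and your treatment of the translation and scaling shadows $\psi_{-1,0}^0$, $\psi_{0,0}^0$, $\omega_{0,0}^0$ via invariance of~\eqref{eq:5}--\eqref{eq:6} is correct and matches the formulas the paper writes down for them. But for the remaining eleven families the proposal stops exactly where the content of the proposition begins. The paper's proof \emph{is} the explicit exhibition of all components $\psi_{i,\alpha}^{0,\pm}$, $\omega_{i,\alpha}^{0,\pm}$, $\xi_{i,\alpha}^{\pm}(A)$, $\upsilon_{i,\alpha}^{\pm}(B)$ --- in particular the operators $\mathcal{X}$, $\mathcal{Y}$ and the recursively defined quantities $P_\alpha(A)$, $Q_\alpha(B)$ that encode the pattern you describe as ``guesswork'' --- followed by a verification of~\eqref{eq:15}. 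You acknowledge that spotting this pattern is the main obstacle and then do not overcome it, so what you have is a search procedure, not a proof that the search succeeds.

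The gap is not merely one of omitted bookkeeping, because the existence argument you offer in its place does not close it. Compatibility $D_x^\pm D_t^\pm(\phi_i^-)=D_t^\pm D_x^\pm(\phi_i^-)$ of the prescribed right-hand sides guarantees formal integrability, but recovering $\phi_i^-$ from its $D_x^\pm$- and $D_t^\pm$-derivatives is an integration in the covering: the required ``potential'' need not exist in $C^\infty(\mathcal{E}^-\times_{\mathcal{E}}\mathcal{E}^+)$, only in some further extension obtained by adjoining new nonlocal variables. That this obstruction is real, and not a technicality, is visible in the paper's own Tables~\ref{tab:psy} and~\ref{tab:omega}: the same shadows lift in some of the coverings $\tau^-$, $\tau^+$, $\tau^\pm$ and fail to lift in others, even though the formal compatibility of the defining equations is covering-independent. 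So ``a solution exists along the fibre directions at every level'' is not justified for the covering $\tau^\pm$ specifically, and the proposition cannot be established without actually producing the components (or at least a closed induction that generates them), which is what the paper does.
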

\begin{proof}
  The lifts of the shadows~$\psi_{i,0}^j$ and~$\omega_{i,0}^j$ are described
  explicitly. Namely, we set
  \begin{align*}
    &\psi_{-1,\alpha}^{0,-}=-r_{\alpha,z},\qquad
    &&\psi_{-1,\alpha}^{0,+}=-q_{\alpha,z},\\
    &\psi_{0,\alpha}^{0,-}=r_{\alpha,t},\qquad
    &&\psi_{0,\alpha}^{0,+}=q_{\alpha,t},\\
    &\psi_{1,\alpha}^{0,-}=r_{\alpha-1,t}-u_t r_{\alpha,x},\qquad
    &&\psi_{1,\alpha}^{0,-}=q_{\alpha+1,t}-u_t q_{\alpha,x},
  \end{align*}
  and
  \begin{align*}
     \omega_{-2,\alpha}^{0,-} &=-(\alpha+2)r_{\alpha+2} - zr_{\alpha+2,z} +
     (r_1+zr_{1,z})r_{\alpha+1,y} + (2r_2+zr_{2,z}-(r_1+zr_{1,z})r_{1,y})r_{\alpha,y},\\
   \omega_{-2,\alpha}^{0,+} &=zq_{\alpha-3,t} -zu_z q_{\alpha-3,x} +
   (\alpha-1)q_{\alpha-2} +u_y(r_1+zr_{1,z})q_{\alpha-2,x}\\
   &+u_y(2r_2+zr_{2,z}-(r_1+zr_{1,z})r_{1,y})q_{\alpha-1,x},\\[2pt]
   \omega_{-1,\alpha}^{0,-}&
   =-(\alpha+1)r_{\alpha+1}-zr_{\alpha+1,z}+(r_1+zr_{1,z})r_{\alpha,y},\\ 
   \omega_{-1,\alpha}^{0,+}& =zq_{\alpha-2,t}-zu_z
   q_{\alpha-2,x}+\alpha q_{\alpha-1}+u_y(r_1+zr_{1,z})q_{\alpha-1,x},\\[2mm]
   \omega_{0,\alpha}^{0,-}&=-\alpha r_\alpha-zr_{\alpha,z},\\
   \omega_{0,\alpha}^{0,+}&=(\alpha+1)q_\alpha-zq_{\alpha,z},\\[2pt]
   \omega_{1,\alpha}^{0,-} &=-(\alpha -1)r_{\alpha -1}-zr_{\alpha
     -1,z}+(zr_{1,t}-ur_{1,x})r_{\alpha -1,y},\\  
   \omega_{1,\alpha}^{0,+} &=(\alpha +2)q_{\alpha +1}-uq_{\alpha
     ,x}+zq_{\alpha ,t},\\[2mm]
   \omega_{2,\alpha}^{0,-} &=-(\alpha -2)r_{\alpha -2}-zr_{\alpha
     -2,z}-(u-zu_z)r_{\alpha -1,x}-(2q_1-uu_x+zu_t)r_{\alpha ,x},\\  
   \omega_{2,\alpha}^{0,+} &=(\alpha +3)q_{\alpha +2}-uq_{\alpha
     +1,x}+zq_{\alpha +1,t}-(2q_1-uu_x+zu_t)q_{\alpha ,x}. 
  \end{align*}
  To lift the shadows~$\xi_{i,0}(A)$ and~$\upsilon_{i,0}(B)$, let us introduce
  the operators
  \begin{equation*}
    \mathcal{Y}= -t \pd{}{z} +
    \sum_{i=0}^{\infty}(i+1)r_{i+1}\pd{}{r_i},\qquad 
    \mathcal{X}= -z \pd{}{t} + u \pd{}{x} +
    \sum_{i=0}^{\infty}(i+2)q_{i+1}\pd{}{q_i} 
    \end{equation*}
    (recall that $r_0=y$ and $q_0=u$) and the quantities $P_\alpha(A)$,
    $Q_\alpha(B)$, $j=0,1,2,\dots$, defined by induction as follows:
    \begin{equation*}
      P_0(A)=A, \quad P_\alpha (A)=\frac{1}{\alpha }\mathcal{Y}(P_{\alpha
        -1}(A)),\qquad   Q_0(B)=B, \quad Q_\alpha (B)=\frac{1}{\alpha
      }\mathcal{X}(Q_{\alpha -1}(B)), \qquad \alpha \geq 1. 
    \end{equation*}
    We also tacitly assume that $P_\alpha(A)$ and $Q_\alpha(B)$ vanish if
    $\alpha$ is negative.  Then
    \begin{align*}
      \xi_{-1,\alpha}^-(A)& =A(r_{1,y}r_{\alpha ,y}-r_{\alpha
        +1,y})-A_yr_1r_{\alpha ,y}+A_ztr_{\alpha ,y}+P_{\alpha +1}(A),\\ 
      \xi_{-1,\alpha}^+(A)&=u_y(A(r_{1,y}q_{\alpha -1,x}-q_{\alpha -2,x}) -A_y
      r_1 q_{\alpha -1,x} + A_z tq_{\alpha -1,x}),\\[2pt]
      \xi_{0,\alpha}^-(A)&=-Ar_{\alpha ,y}+P_\alpha (A), \\
      \xi_{0,\alpha}^+(A)&=-Aq_{\alpha,y},\\[2pt]
      \xi_{1,\alpha}^-(A)&=P_{\alpha-1}(A), \\
      \xi_{1,\alpha}^+(A)&=0,
      \intertext{and}
      \upsilon_{-1,\alpha}^-(B)&=0, \\
      \upsilon_{-1,\alpha}^+(B)&=Q_\alpha(B),\\[2pt]
      \upsilon_{0,\alpha}^-(B)&=-Br_{\alpha,x}, \\
      \upsilon_{0,\alpha}^+(B)&=-Bq_{\alpha ,x}+Q_{\alpha +1}(B),\\[2pt]
      \upsilon_{1,\alpha}^-(B)&=B (u_xr_{1,x}r_{\alpha -1,y}- r_{\alpha
          -1,x} ) - B_x u r_{1,x} r_{\alpha -1,y} + B_t z
        r_{1,x}r_{\alpha -1,y},\\ 
        \upsilon_{1,\alpha}^+(B)&=B ( u_xq_{\alpha ,x}-q_{\alpha +1,x})-B_x u
        q_{\alpha ,x}+ B_t zq_{\alpha ,x}+Q_{\alpha +2}(B).  
      \end{align*}
      It is atrightforward to check that these are indeed the needed lifts.
  \end{proof}
  Thus, we obtained fourteen symmetries
  \begin{gather*}
    \Psi_{-1}^0,\quad \Psi_0^0,\quad \Psi_1^0,\\
    \Omega_{-2}^0,\quad \Omega_{-1}^0,\quad \Omega_0^0,\quad \Omega_1^0,\quad
    \Omega_2^0,\\
    \Xi_{-1}(A),\quad \Xi_0(A),\quad \Xi_1(A),\\
    \Upsilon_{-1}(B),\quad \Upsilon_0(B),\quad \Upsilon_1(B)
  \end{gather*}
  in $\tau^\pm$ which will serve as seeds for construction the entire algebra
  of nonlocal symmetries.

\begin{remark}
  It is worth to note that the operators~$\mathcal{X}$ and~$\mathcal{Y}$ used
  in the proof of Proposition~\ref{sec:algebr-nonl-symm-prrop-1} have a
  transparent geometrical interpretation. Namely, consider the system
  consisting of Equations~\eqref{eq:1} and~\eqref{eq:3} and let us treat the
  parameter~$\lambda$ as an additional independent variable with the condition
  $u_\lambda = 0$. Then the total derivative~$D_\lambda$ transforms
  to~$\mathcal{X}$ when passing from the covering~\eqref{eq:3} to~$\tau^+$ and
  to~$\mathcal{Y}$ when passing to~$\tau^-$.
\end{remark}
%\medskip

\textbf{Construction of hierarchies 1.}
Now we use the symmetries~$\Omega_{\pm 1}^0$ as hereditary ones and construct
two infinite hierarchies
\begin{equation*}
  \Psi_i^0 =
  \begin{cases}
    \dfrac{1}{i+1}\{\Omega_{-1}^0,\Psi_{i+1}^0\},&\text{if } i\leq-2,\\[10pt]
    \dfrac{1}{i-1}\{\Omega_1^0,\Psi_{i-1}^0\},&\text{if } i\geq 2,
  \end{cases}
\end{equation*}
and
\begin{equation*}
  \Omega_i^0 =
  \begin{cases}
    \dfrac{1}{i+2}\{\Omega_{-1}^0,\Omega_{i+1}^0\},&\text{if } i\leq-3,\\[10pt]
    \dfrac{1}{i-2}\{\Omega_1^0,\Omega_{i-1}^0\}&\text{if } i\geq 3.
  \end{cases}
\end{equation*}
These hierarchies will be used below to construct new ones.
\medskip

\textbf{Construction of hierarchies 2.} Define the functions
\begin{equation}
  \label{eq:10}
  \psi_{i,0}^j = \sum_{m=0}^j(-1)^m\binom{j}{m}t^{j-m}z^m\psi_{i-m,0}^0,\quad j\geq1.
\end{equation}

\begin{theorem}
  \label{sec:algebr-nonl-symm-thm-1}
  Formula~\eqref{eq:10} defines shadows of symmetries. These shadows can be
  lifted to $\tau^\pm$ and thus define infinite number of
  hierarchies~$\{\Psi_i^j\}$ of nonlocal symmetries.
\end{theorem}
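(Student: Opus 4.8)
The plan is to verify the two assertions of the theorem in turn: first that each function $\psi_{i,0}^j$ defined by \eqref{eq:10} satisfies the shadow equation $\ell_{\mathcal{E}}^{\pm}(\psi_{i,0}^j)=0$ (equivalently $\ell_{\mathcal{E}}(\psi_{i,0}^j)=0$, since a shadow only needs to solve the linearization \eqref{eq:2}), and then that the lift to $\tau^{\pm}$ can be produced explicitly by the same $t,z$-combination of the already-constructed lifts $\Psi_{i-m}^0$. The key structural observation is that the operator of multiplication by $(zD_t - tD_t\cdot\dots)$ — more precisely, the first-order operator $\mathcal{M} = t\,\mathrm{id}\circ(\text{shift}) - z\,\mathrm{id}\circ(\text{shift})$ encoded in \eqref{eq:10} — interacts in a controlled way with $\ell_{\mathcal{E}}$. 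Concretely, I would first record the identity that $z$ and $t$ enter the total derivatives only through $D_z(z)=1$, $D_t(t)=1$ (all other variables being derivatives of $u$, $r$, $q$), so that for any function $f$ one has commutator relations of the form $\ell_{\mathcal{E}}(tf) = t\,\ell_{\mathcal{E}}(f) + [\ell_{\mathcal{E}},t](f)$ where $[\ell_{\mathcal{E}},t]$ is an explicit lower-order operator obtainable from \eqref{eq:2} (it picks up terms from $D_yD_t$, $D_z$, etc.). The same for multiplication by $z$.

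Then the main computation is a single induction on $j$. Assume $\psi_{i,0}^{j-1}$ is a shadow for all $i$ (the base case $j=0$ is Proposition~\ref{sec:algebr-nonl-symm-prrop-1}, and the three seed shadows $\psi_{-1,0}^0=-u_z$, $\psi_{0,0}^0=u_t$, $\psi_{1,0}^0=q_{1,t}-u_tu_x$ are known). Using Pascal's rule $\binom{j}{m}=\binom{j-1}{m}+\binom{j-1}{m-1}$, rewrite $\psi_{i,0}^j = t\,\psi_{i,0}^{j-1} - z\,\psi_{i-1,0}^{j-1}$. Applying $\ell_{\mathcal{E}}$, the two inductive terms $t\,\ell_{\mathcal{E}}(\psi_{i,0}^{j-1})$ and $-z\,\ell_{\mathcal{E}}(\psi_{i-1,0}^{j-1})$ vanish, and what remains is $[\ell_{\mathcal{E}},t](\psi_{i,0}^{j-1}) - [\ell_{\mathcal{E}},z](\psi_{i-1,0}^{j-1})$; this must be shown to vanish. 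It does not vanish termwise — rather, the commutator $[\ell_{\mathcal{E}},t]$ produces (up to sign) the operator $u_z D_x D_y - u_y D_x D_z - u_{xz} D_y$ applied with one derivative "missing", which is exactly cancelled against $[\ell_{\mathcal{E}},z]$ once one uses that $\psi_{i-1,0}^{j-1}$ is obtained from $\psi_{i,0}^{j-1}$ by the $i\mapsto i-1$ shift — i.e. one needs the compatibility $\psi_{i-1,0}^0$ is the " $\overleftarrow{\mathcal{R}}$-image" type relation, or more directly the recursion $D_z(\psi_{i,0}^0)=D_t(\psi_{i-1,0}^0)$-style identities that the seeds satisfy by construction. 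I would extract precisely the identity $D_t(\psi_{i,0}^0) = D_z(\psi_{i+1,0}^0)$ (check on the three seeds: $D_t(-u_z)=-u_{zt}$, $D_z(u_t)=u_{zt}$, etc.), feed it into the commutator bookkeeping, and close the induction.

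For the lift claim, the natural guess is $\psi_{i,\alpha}^{j,\pm} = \sum_{m=0}^j(-1)^m\binom{j}{m}t^{j-m}z^m\,\psi_{i-m,\alpha}^{0,\pm}$, i.e. apply the identical $t,z$-combination to the components $\psi_{i,\alpha}^{0,-}$ and $\psi_{i,\alpha}^{0,+}$ written down in Proposition~\ref{sec:algebr-nonl-symm-prrop-1}. One must check that this satisfies the remaining equations of the system \eqref{eq:15} (the $D_t^{\pm}$, $D_x^{\pm}$, $D_y^{\pm}$, $D_z^{\pm}$ relations for the $\phi^i$). These are again first-order linear equations in which $z$ and $t$ appear only via $D_z(z)=D_t(t)=1$, so the same commutator-and-Pascal induction applies verbatim, now using that the $\psi^{0,\pm}_{i,\alpha}$ satisfy \eqref{eq:8}–\eqref{eq:9}. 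The main obstacle is purely the combinatorial/algebraic bookkeeping in the induction step: correctly identifying the lower-order commutator $[\ell_{\mathcal{E}},t]$ (and its analogues for the covering equations) and matching it against the shift relation among the seeds. Once the single identity $D_t(\psi_{i,0}^0)=D_z(\psi_{i+1,0}^0)$ (and the analogous relations for the lifted components) is in hand, every step is forced, and the independence / non-triviality of the resulting hierarchies $\{\Psi_i^j\}$ follows from the weight count in Remark~\ref{sec:rem-3}, since multiplication by $t^{j-m}z^m$ shifts weights in a way that makes the $\Psi_i^j$ with distinct $(i,j)$ manifestly linearly independent.
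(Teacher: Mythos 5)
Your overall architecture coincides with the paper's: rewrite \eqref{eq:10} recursively as $\psi_{i,0}^j = t\psi_{i,0}^{j-1} - z\psi_{i-1,0}^{j-1}$ (the paper's \eqref{eq:14}), run an induction on $j$, note that $[\ell_{\mathcal{E}}^\pm,t] = D_y^\pm$ and $[\ell_{\mathcal{E}}^\pm,z] = u_yD_x^\pm - u_{xy}$, so that after the inductive terms vanish the leftover is $D_y^\pm(\psi_{i,0}^{j-1}) + u_{xy}\psi_{i-1,0}^{j-1} - u_yD_x^\pm(\psi_{i-1,0}^{j-1})$, kill it by a relation between consecutive seeds, and then lift by applying the same $t,z$-combination to the components from Proposition~\ref{sec:algebr-nonl-symm-prrop-1}. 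However, the identity you propose to ``extract'', $D_t(\psi_{i,0}^0) = D_z(\psi_{i+1,0}^0)$, is not the one the leftover term calls for (that term contains $D_y$ and $D_x$, not $D_t$ and $D_z$), and it is in fact false: already for $i=0$ one has $D_t(u_t)=u_{tt}$, whereas $D_z(q_{1,t}-u_tu_x)=u_zu_{xt}-u_tu_{xz}-u_{tt}$ (use $q_{1,z}=u_zu_x-u_t$). The identity that actually closes the induction is the first half of the recursion-operator relation \eqref{eq:16} linking consecutive seeds, $D_y^\pm(\psi_{i,0}^0) + u_{xy}\psi_{i-1,0}^0 - u_yD_x^\pm(\psi_{i-1,0}^0)=0$, i.e.\ \eqref{eq:18}; and for $j>1$ this relation itself must be carried along as part of the induction hypothesis --- it propagates from level $j-1$ to level $j$ precisely because it contains no $D_z^\pm$ or $D_t^\pm$, a point your sketch leaves implicit.

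The lift is also not a ``verbatim'' repeat. Substituting $\psi_{i,\alpha}^{1,\pm}=t\psi_{i,\alpha}^{0,\pm}-z\psi_{i-1,\alpha}^{0,\pm}$ into \eqref{eq:15} does not reduce to the fact that the seed components satisfy \eqref{eq:8}--\eqref{eq:9}; it produces genuinely new constraints on the $j=0$ components, namely \eqref{eq:20}--\eqref{eq:21} (for instance $\psi_{i,\alpha}^{0,-}=-u_y^{-1}\psi_{i-1,0}^{0,-}+\psi_{i-1,\alpha-1}^{0,-}$ and the $\tau^+$-analogue involving $q_{\alpha-1,x}$). These are the component-level analogues of \eqref{eq:18} and require their own proof --- in the paper a separate double induction on $i$ (Lemma~\ref{sec:algebr-nonl-symm-lemm-1}) with substantial explicit computation. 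That is exactly the piece your proposal compresses into ``the same induction applies verbatim''. The closing remark on linear independence via weights is correct but not needed for the statement of the theorem.
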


\begin{proof}
  The proof is accomplished in two steps: first we establish
  that~$\psi_{i,0}^j$ are shadows and then show that they can be lifted.

  \emph{Step $1$}. Induction on~$j$. To this end, let us rewrite~\eqref{eq:10}
  recursively. Namely, we write
  \begin{equation}
    \label{eq:14}
    \psi_{i,0}^j = t\psi_{i,0}^{j-1} -z\psi_{i-1,0}^{j-1},\quad i\geq 1.
  \end{equation}
  Let~$j=1$. Consider the linearization operator lifted to $\tau^\pm$
  \begin{equation}
    \label{eq:17}
    \ell_{\mathcal{E}}^\pm = D_y^\pm D_t^\pm - u_{xy}D_z^\pm + u_{xz}D_y^\pm -
    u_zD_x^\pm D_y^\pm + u_yD_x^\pm D_z^\pm.
  \end{equation}
  Then due to~\eqref{eq:17} for $j=1$ one obviously has
  \begin{multline*}
    \ell_{\mathcal{E}}^\pm(\psi_{i,0}^1) =
    \ell_{\mathcal{E}}^\pm(t\psi_{i,0}^0 - z\psi_{i-1,0}^0) =\\
    t\ell_{\mathcal{E}}^\pm(\psi_{i,0}^0) -
    z\ell_{\mathcal{E}}^\pm(\psi_{i-1,0}^0) + D_y^\pm(\psi_{i,0}^0) +
    u_{xy}\psi_{i-1,0}^0 - u_yD_x^\pm(\psi_{i-1,0}^0) =\\  D_y^\pm(\psi_{i,0}^0) +
    u_{xy}\psi_{i-1,0}^0 - u_yD_x^\pm(\psi_{i-1,0}^0),
  \end{multline*}
  since $\psi_{i,0}^0$ and $\psi_{i-1,0}^0$ are shadows. But the last term in
  the equalities above is exactly the first equation in the
  formula~\eqref{eq:16} for the recursion operator. It can be checked that
  this operator, modulo the image of zero (see discussion in the end of the
  paper) connects the shadows~$\psi_{i,0}^0$ and~$\psi_{i-1,0}^0$. In particular,
  \begin{equation}\label{eq:18}
    D_y^\pm(\psi_{i,0}^0) +
    u_{xy}\psi_{i-1,0}^0 - u_yD_x^\pm(\psi_{i-1,0}^0)=0,
  \end{equation}
  because all~$\psi_{\alpha,0}^0$ are shadows. Moreover, since~\eqref{eq:18}
  does not contain the total derivatives in~$z$ and~$t$, we deduce,
  using~\eqref{eq:14},that
  \begin{equation*}
    D_y^\pm(\psi_{i,0}^1) +
    u_{xy}\psi_{i-1,0}^1 - u_yD_x^\pm(\psi_{i-1,0}^1)=0.
  \end{equation*}

  Let now~$j>1$ and assume that for all~$l<j$ and~$i\in\mathbb{Z}$ the
  functions~$\psi_{i,0}^l$ are shadows that enjoy the relations
  $D_y^\pm(\psi_{i,0}^l) + u_{xy}\psi_{i-1,0}^l -
  u_yD_x^\pm(\psi_{i-1,0}^l)=0$. Then the proof of the induction step is
  exactly the same as the one for the case~$j=1$.\medskip
  
  \emph{Step $2$}. We shall now prove that the functions
  \begin{equation}
    \label{eq:19}
    \psi_{i,\alpha}^{j,\pm} = t\psi_{i,\alpha}^{j-1,\pm} - z\psi_{i-1,\alpha}^{j-1,\pm}
  \end{equation}
  satisfy System~\eqref{eq:15} for all~$i\in\mathbb{Z}$, $j\geq0$, $\alpha\geq
  1$.  We also use induction on~$j$ here.

  Consider the case~$j=1$. Substituting the expression $\psi_{i,\alpha}^{1,\pm}
  = t\psi_{i,\alpha}^{0,\pm} - z\psi_{i-1,\alpha}^{0,\pm}$ to the defining
  equations~\eqref{eq:15}, we obtain for~$\tau^-$
  \begin{equation*}
    D_t^-(t\psi_{i,1}^{0,-} - z\psi_{i-1,1}^{0,-}) =
    \frac{1}{u_y^2}\left(u_yD_z^-(t\psi_{i,0}^{0,-} - z\psi_{i-1,0}^{0,-})
      -u_zD_y^-(t\psi_{i,0}^{0,-} - z\psi_{i-1,0}^{0,-})\right)  
  \end{equation*}
  in the case~$\alpha=1$ and
  \begin{multline*}
    D_t^-(t\psi_{i,\alpha}^{0,-} - z\psi_{i-1,\alpha}^{0,-}) =
    \frac{1}{u_y^2}\left(u_yD_z^-(t\psi_{i,0}^{0,-} -
      z\psi_{i-1,0}^{0,-}) - u_zD_y^-(t\psi_{i,0}^{0,-} -
      z\psi_{i-1,0}^{0,-})\right)\\
    + \frac{1}{u_y}\left((u_yD_z^-(t\psi_{i,0}^{0,-} -
      z\psi_{i-1,0}^{0,-}) - u_zD_y^-(t\psi_{i,\alpha-1}^{0,-} -
      z\psi_{i-1,\alpha-1}^{0,-})\right),
  \end{multline*}
  when~$\alpha>1$. But the functions~$\psi_{i,\alpha}^{0,-}$ are the
  components of the nonlocal symmetries~$\Psi_i^0$ and hence we obtain the
  conditions
  \begin{equation}
    \label{eq:20}
    \psi_{i,1}^{0,-} = -1\frac{1}{u_y}\psi_{i-1,0}^{0,-},\qquad
    \psi_{i,\alpha}^{0,-} = -\frac{1}{u_y}\psi_{i-1,0}^{0,-} +
    \psi_{i-1,\alpha-1}^{0,-},\quad \alpha>1.
  \end{equation}
  from the above equations.

  Similar computations show that the conditions
  \begin{equation}
    \label{eq:21}
    \psi_{i-1,1}^{0,+} = u_z\psi_{i-1,0}^{0,+},\qquad \psi_{i-1,\alpha}^{0,+}
    = q_{\alpha-1,x}\psi_{i-1,0}^{0,+} + \psi_{i,\alpha-1}^{0,+},\quad \alpha>1.
  \end{equation}
  must hold in~$\tau^+$.
  
  \begin{lemma}
    \label{sec:algebr-nonl-symm-lemm-1}
    Conditions~\eqref{eq:20} and~\eqref{eq:21} do hold for all~$\alpha>1$
    and~$i\in \mathbb{Z}$.
  \end{lemma}
  \begin{proof}[Proof of Lemma~\textup{\ref{sec:algebr-nonl-symm-lemm-1}}]
    The proof comprises two inductions on~$i$ (for~$i\geq0$ and~$i\leq0$) and
    consists of voluminous computations based on explicit descriptions from
    Proposition~\ref{sec:algebr-nonl-symm-prrop-1} and on the definition of
    the symmetries~$\Psi_i^0$. We omit the details.
  \end{proof}

  Note now that the functions~$\psi_{i,\alpha}^{1,\pm} =
  t\psi_{i,\alpha}^{0,\pm} - z\psi_{i-1,\alpha}^{0,\pm}$ satisfy the
  conditions similar to~\eqref{eq:20} and~\eqref{eq:21} by linearity. This
  finishes the proof of the induction base. The proof of the induction step
  does not differ from the latter.
\end{proof}

In a similar way, we define the functions
\begin{equation}
  \label{eq:11}
  \omega_{i,0}^j =
  \sum_{m=0}^j(-1)^m\binom{j}{m}t^{j-m}z^m\omega_{i-m,0}^0,\quad j\geq1, 
\end{equation}
and prove the following
\begin{theorem}
  \label{sec:algebr-nonl-symm-thm-2}
  Formula~\eqref{eq:11} defines shadows of symmetries. These shadows can be
  lifted to $\tau^\pm$ and thus define infinite number of
  hierarchies~$\{\Omega_i^j\}$ of nonlocal symmetries.
\end{theorem}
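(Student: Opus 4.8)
The plan is to mimic verbatim the two-step strategy already used to prove Theorem~\ref{sec:algebr-nonl-symm-thm-1}, since Formula~\eqref{eq:11} has exactly the same shape as Formula~\eqref{eq:10} with $\psi$ replaced by $\omega$. First I would record the recursive form of~\eqref{eq:11}, namely $\omega_{i,0}^j = t\,\omega_{i,0}^{j-1} - z\,\omega_{i-1,0}^{j-1}$ for $i\ge 1$, and observe that the whole argument rests on two ingredients inherited from the $\omega$-seeds rather than the $\psi$-seeds: (a) each $\omega_{i,0}^0$ is a shadow, i.e.\ $\ell_{\mathcal{E}}^\pm(\omega_{i,0}^0)=0$; and (b) the analogue of~\eqref{eq:18} holds, $D_y^\pm(\omega_{i,0}^0) + u_{xy}\omega_{i-1,0}^0 - u_yD_x^\pm(\omega_{i-1,0}^0)=0$, which expresses that the recursion operator~$\overrightarrow{\mathcal{R}}$ of~\eqref{eq:16} links consecutive seeds $\omega_{i-1,0}^0\mapsto\omega_{i,0}^0$ modulo the image of zero. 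Ingredient~(a) is part of the statement preceding Proposition~\ref{sec:algebr-nonl-symm-prrop-1}; ingredient~(b) is checked directly from the explicit formulas for $\omega_{-2,0}^0,\dots,\omega_{2,0}^0$ listed there, together with the defining relations of the hierarchies $\{\Omega_i^0\}$ for $|i|\ge 3$ built in ``Construction of hierarchies~1''.

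\emph{Step 1 (these are shadows).} Induction on $j$. For $j=1$, apply $\ell_{\mathcal{E}}^\pm$ in the form~\eqref{eq:17} to $\omega_{i,0}^1 = t\,\omega_{i,0}^0 - z\,\omega_{i-1,0}^0$; the terms with $t$ and $z$ as coefficients vanish by~(a), and the leftover terms $D_y^\pm(\omega_{i,0}^0) + u_{xy}\omega_{i-1,0}^0 - u_yD_x^\pm(\omega_{i-1,0}^0)$ vanish by~(b). Because that relation involves no $D_z^\pm$ or $D_t^\pm$, the same computation on $\omega_{i,0}^j = t\,\omega_{i,0}^{j-1} - z\,\omega_{i-1,0}^{j-1}$ propagates the inductive hypothesis ``$\omega_{i,0}^l$ is a shadow and $D_y^\pm(\omega_{i,0}^l) + u_{xy}\omega_{i-1,0}^l - u_yD_x^\pm(\omega_{i-1,0}^l)=0$ for all $l<j$, $i\in\mathbb{Z}$'' from $j-1$ to $j$, word for word as in the proof of Theorem~\ref{sec:algebr-nonl-symm-thm-1}.

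\emph{Step 2 (the lifts exist).} Set $\omega_{i,\alpha}^{j,\pm} = t\,\omega_{i,\alpha}^{j-1,\pm} - z\,\omega_{i-1,\alpha}^{j-1,\pm}$ and verify System~\eqref{eq:15} by induction on $j$. Substituting the $j=1$ expressions into the $\tau^-$ and $\tau^+$ parts of~\eqref{eq:15} and using that $\{\omega_{i,\alpha}^{0,\pm}\}$ are the components of the already-constructed symmetries $\Omega_i^0$, the mixed terms cancel and one is left with compatibility conditions of exactly the type~\eqref{eq:20}--\eqref{eq:21}, now for the $\omega$-components:
\begin{equation*}
  \omega_{i,1}^{0,-} = -\frac{1}{u_y}\,\omega_{i-1,0}^{0,-},\qquad
  \omega_{i,\alpha}^{0,-} = -\frac{1}{u_y}\,\omega_{i-1,0}^{0,-} + \omega_{i-1,\alpha-1}^{0,-}\quad(\alpha>1),
\end{equation*}
\begin{equation*}
  \omega_{i-1,1}^{0,+} = u_z\,\omega_{i-1,0}^{0,+},\qquad
  \omega_{i-1,\alpha}^{0,+} = q_{\alpha-1,x}\,\omega_{i-1,0}^{0,+} + \omega_{i,\alpha-1}^{0,+}\quad(\alpha>1).
\end{equation*}
These play the role of Lemma~\ref{sec:algebr-nonl-symm-lemm-1} and are established by the same two inductions on $i$ (for $i\ge 0$ and $i\le 0$), starting from the explicit $\omega$-lifts $\omega_{i,\alpha}^{0,\pm}$ of Proposition~\ref{sec:algebr-nonl-symm-prrop-1} and the recursive definition of $\{\Omega_i^0\}$; by linearity the relations pass from the seeds to all $\omega_{i,\alpha}^{1,\pm}$, and then the induction step from $j-1$ to $j$ is identical.

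The main obstacle is the same one the authors flagged for Theorem~\ref{sec:algebr-nonl-symm-thm-1}: verifying the $\omega$-analogues of~\eqref{eq:20}--\eqref{eq:21} for every $\alpha>1$ and all $i\in\mathbb{Z}$. Here it is somewhat heavier, because the $\omega$-family has five seeds $\omega_{-2}^0,\dots,\omega_{2}^0$ with genuinely nonlinear, $z$-weighted lift-components $\omega_{i,\alpha}^{0,\pm}$ (involving products such as $(r_1+zr_{1,z})r_{\alpha,y}$ and $u_y(r_1+zr_{1,z})q_{\alpha-1,x}$), so the two inductions on $i$ must track more terms than in the $\psi$ case and must also use the hereditary character of $\Omega_{\pm1}^0$ to handle the tails $|i|\ge 3$. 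As in the previous theorem, these are voluminous but routine computations, so I would state the corresponding lemma and omit its detailed proof.
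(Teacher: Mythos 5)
Your proposal is correct and follows precisely the route the paper intends: the paper's own proof of Theorem~\ref{sec:algebr-nonl-symm-thm-2} consists solely of the remark that it ``almost exactly copies'' the proof of Theorem~\ref{sec:algebr-nonl-symm-thm-1}, which is the two-step induction you carry out with $\psi$ replaced by $\omega$. Your explicit statement of the $\omega$-analogues of~\eqref{eq:14}, \eqref{eq:18}, \eqref{eq:20} and~\eqref{eq:21} is a faithful (and somewhat more detailed) rendering of that same argument.
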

\begin{proof}[The proof almost exactly copies the one of
  Theorem~\textup{\ref{sec:algebr-nonl-symm-thm-1}}]  
\end{proof}

\begin{remark}
  As it follows from Theorems~\ref{sec:algebr-nonl-symm-thm-1}
  and~\ref{sec:algebr-nonl-symm-thm-2}, the hierarchies~$\{\Psi_i^j\}$,
  $\{\Omega_i^j\}$, $i\in\mathbb{Z}$, $j\geq0$, exist in the Whitney
  product~$\tau^\pm$, but this result may be clarified. More detailed
  information on the $\Psi$-hierarchies is presented in
  Table~\ref{tab:psy}. Note that the symmetries~$\Psi_{-1}^0$, $\Psi_0^0$,
  and~$\Psi_0^1$ are local.
  Additional properties of the $\Omega$-hierarchies are given in
  Table~\ref{tab:omega}. Of all these symmetries, only~$\Omega_0^0$ is a local
  one.
  \begin{table}[h]
    \centering
    \begin{tabular}{||c||c|c||}\hline\hline
      $\Psi_i^j$&$j<i+2$&$j\geq i+2$\\\hline\hline
      $i\leq0$&in $\tau^-$, $\tau^+$, $\tau^\pm$&in $\tau^-$, $\tau^\pm$\\\hline
      $i>0$&in $\tau^+$, $\tau^\pm$&in $\tau^\pm$ only\\\hline\hline
    \end{tabular}\smallskip
    \caption{Distribution of $\Psi_i^j$ over $\tau^-$, $\tau^+$, and $\tau^\pm$}
    \label{tab:psy}
  \end{table}  
  \begin{table}[h]
    \centering
    \begin{tabular}{||c||c|c||}\hline\hline
      $\Omega_i^j$&$j<i+1$&$j\geq i+1$\\\hline\hline
      $i\leq0$&in $\tau^-$, $\tau^+$, $\tau^\pm$&in $\tau^-$, $\tau^\pm$\\\hline
      $i>0$&in $\tau^+$, $\tau^\pm$&in $\tau^\pm$ only\\\hline\hline
    \end{tabular}\smallskip
    \caption{Distribution of $\Omega_i^j$ over $\tau^-$, $\tau^+$, and
      $\tau^\pm$} 
    \label{tab:omega}
  \end{table}
\end{remark}

\textbf{Construction of hierarchies 3.} The last step is the construction of
the $(x,t)$- and $(y,x)$-dependent hierarchies. To this end, we set
\begin{equation*}
  \Xi_i(A) =
  \begin{cases}
    \dfrac{1}{i+1}\{\Omega_{-1}^0+\Psi_{-1}^1,\Xi_{i+1}(A)\},&\text{if }i\leq
    -2,\\[10pt] 
    \dfrac{1}{i-1}\{\Omega_1^0+\Psi_1^1,\Xi_{i-1}(A)\},&\text{ if }i\geq2,
  \end{cases}
\end{equation*}
and
\begin{equation*}
  \Upsilon_i(B) =
  \begin{cases}
    \dfrac{1}{i+1}\{\Omega_{-1}^0,\Upsilon_{i+1}(B)\},&\text{if }i\leq-2,\\[10pt]
    \dfrac{1}{i-1}\{\Omega_1^0,\Upsilon_{i-1}(B)\}&\text{if }i\geq2
  \end{cases}
\end{equation*}
(recall that $A=A(y,z)$ and~$B=B(x,t)$ are arbitrary smooth functions).

\begin{remark}
  As above, the structure of these hierarchies may be clarified in some
  respects. Namely, we have the following facts:
  \begin{equation*}
    \Xi_i(A)\text{ is a symmetry in }
    \begin{cases}
      \tau^-,\tau^\pm,&\text{if }i\leq-1,\\
      \tau^-,\tau^+,\tau^\pm,&\text{if }i=0,\\
      \tau^\pm,&\text{if }i\geq1.
    \end{cases}
  \end{equation*}
  Moreover, the symmetry~$\Xi_0(A)$ is local, while~$\Xi_i(A)$ are invisible
  symmetries for~$i\geq1$.

  In a similar way,
  \begin{equation*}
    \Upsilon_i(B)\text{ is a symmetry in }
    \begin{cases}
      \tau^\pm,&\text{if }i\leq -2,\\
      \tau^-,\tau^+,\tau^\pm,&\text{if }i=-1,0,\\
      \tau^+,\tau^\pm,&\text{if }i\geq1.
    \end{cases}
  \end{equation*}
  The symmetries~$\Upsilon_i(B)$ are invisible for all~$i\leq -2$ and the
  symmetries~$\Upsilon_{-1}(B)$, $\Upsilon_0(B)$ are local ones.
\end{remark}
\medskip

\textbf{Lie algebra structure.} Let us now describe the structure of the Lie
algebra formed by the above constructed symmetries. To this end, relabel some
of them to make the results look neater. Namely, we change notation as
follows:
\begin{equation*}
  \Psi_i^j\mapsto-\Psi_i^{j+1},\qquad \Xi_i(A)\mapsto\Xi_i(A\cdot z^{-i}).
\end{equation*}
Then we have the following result:
\begin{theorem}
  \label{sec:algebr-nonl-symm-thm-3}
  The Lie algebra~$\mathfrak{g} = \sym_{\tau^\pm}(\mathcal{E})$ of the
  $\tau^\pm$-nonlocal symmetries for the 4D MASh equation as an
  $\mathbb{R}$-vector space is generated by the elements
  \begin{equation*}
    \{\Psi_i^j\}_{i\in\mathbb{Z}}^{j\geq1},\quad
    \{\Omega_i^j\}_{i\in\mathbb{Z}}^{j\geq0},\quad
    \{\Upsilon_i(B)\}_{i\in\mathbb{Z}},\quad \{\Xi_i(A)\}_{i\in\mathbb{Z}},
  \end{equation*}
  where~$B=B(x,t)$ and $A=A(y,z)$ are arbitrary smooth functions. They enjoy the
  commutator relations presented in Table~\textup{\ref{tab:Lie-alg}}.
  \begin{table}[h]
    \centering
    \begin{tabular}{||c||c|c|c|c||}\hline\hline
      &$\Psi_k^l$
      &$\Omega_k^l$
      &$\Upsilon_k(\bar{B})$
      &$\Xi_k(\bar{A})$\\\hline\hline
      $\Psi_i^j$
      &$(l-j)\Psi_{i+k}^{j+l}$
      &$l\Omega_{i+k}^{j+l} - i\Psi_{i+k}^{j+l}$
      &$\Upsilon_{i+k}(t^{j+1}\bar{B}_t)$
      &$(-1)^j\Xi_{k+i-j}(z^{i+1}\bar{A}_z-kz^i\bar{A})$\\\hline
      $\Omega_i^j$&
      &$(k-i)\Omega_{i+k}^{j+l}$
      &$\Upsilon_{i+k}(kt^j\bar{B})$
      &$(-1)^j\Xi_{k+i-j}(z^{i+1}\bar{A}_z)$\\\hline
      $\Upsilon_i(B))$&
      &
      &$\Upsilon_{i+k}([B,\bar{B}])$&0\\\hline
      $\Xi_i(A)$&
      &
      &&$\Xi_{i+k}([A,\bar{A}])$\\\hline\hline
    \end{tabular}\smallskip
    \caption{The Lie algebra structure}
    \label{tab:Lie-alg}
  \end{table}\\  
  Here the notation
  \begin{equation*}
    [A,\bar{A}] = A\bar{A}_y-\bar{A}A_y,\qquad [B,\bar{B}] = B\bar{B}_x -
    \bar{B}B_x 
  \end{equation*}
  was used.
\end{theorem}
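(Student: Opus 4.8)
The plan is to establish the two halves of the statement separately: first, that every $\tau^\pm$-nonlocal symmetry is an $\mathbb{R}$-linear combination of the listed generators — all of which are already known to be symmetries, coming from Proposition~\ref{sec:algebr-nonl-symm-prrop-1}, Theorems~\ref{sec:algebr-nonl-symm-thm-1} and~\ref{sec:algebr-nonl-symm-thm-2}, and the bracket definitions of the $\Xi$- and $\Upsilon$-hierarchies; second, that their Jacobi brackets are the ones tabulated. The organizing device is the grading of Remark~\ref{sec:rem-3}, normalized by $\abs{x}=-1$, $\abs{t}=\abs{y}=\abs{u}=0$, $\abs{z}=1$: the linearization~\eqref{eq:17} and every relation in~\eqref{eq:15} are homogeneous, so $\mathfrak{g}$ is graded with manageable weight components and every argument runs weight by weight. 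I would carry out the analysis for the pre-relabeling objects and only at the end translate through $\Psi_i^j\mapsto-\Psi_i^{j+1}$ and $\Xi_i(A)\mapsto\Xi_i(Az^{-i})$.

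For the spanning statement, a $\tau^\pm$-symmetry is a tuple $(\phi^\pm,\phi^1,\phi^2,\dots)$ solving~\eqref{eq:15}, whose shadow $\phi^\pm$ solves $\ell_{\mathcal{E}}^\pm(\phi^\pm)=0$. I would classify the shadows first: filter $\phi^\pm$ by its polynomial degree in the nonlocal variables $r_1,r_2,\dots$ and $q_1,q_2,\dots$, substitute into~\eqref{eq:17}, and peel off the top-degree part to obtain an overdetermined linear system; its solutions, once one invokes the fact recorded after~\eqref{eq:7} that $\mathcal{E}$ itself admits point symmetries only, are spanned within each weight by the shadow families $\psi_{i,0}^j$, $\omega_{i,0}^j$, $\xi_{i,0}(A)$ and $\upsilon_{i,0}(B)$ of Proposition~\ref{sec:algebr-nonl-symm-prrop-1} and Theorems~\ref{sec:algebr-nonl-symm-thm-1}--\ref{sec:algebr-nonl-symm-thm-2}. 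Each such shadow lifts to $\tau^\pm$ (this is precisely what those results and the hierarchy constructions provide), and a lift is unique up to an invisible symmetry. It then remains to classify the invisible symmetries, i.e. the solutions of~\eqref{eq:15} with $\phi^\pm\equiv0$; since the last four relations of~\eqref{eq:15} are triangular in $i$, this is again a weight-by-weight linear problem, and one checks that the invisible part of $\mathfrak{g}$ is exactly the span of the $\Xi_i(A)$ with $i\ge1$ and the $\Upsilon_i(B)$ with $i\le-2$. Combining the three steps gives the spanning claim.

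For the bracket table, each entry is computed directly from the explicit component formulas of Proposition~\ref{sec:algebr-nonl-symm-prrop-1} and the inductive definitions of the hierarchies, with the grading keeping the work finite: since $\abs{\{S,T\}}=\abs{S}+\abs{T}$, a bracket of two homogeneous generators lands in a single prescribed weight, so it suffices to compute the shadow ($\pd{}{u}$-component) of the bracket, recognize it within the finite list of candidate shadows of that weight, and then invoke uniqueness of lifts up to invisible symmetries to pin down the remaining components. The entries carrying the arbitrary functions — such as $\{\Psi_i^j,\Upsilon_k(\bar B)\}=\Upsilon_{i+k}(t^{j+1}\bar B_t)$, $\{\Omega_i^j,\Upsilon_k(\bar B)\}=\Upsilon_{i+k}(kt^j\bar B)$, $\{\Psi_i^j,\Xi_k(\bar A)\}=(-1)^j\Xi_{k+i-j}(z^{i+1}\bar A_z-kz^i\bar A)$ and $\{\Xi_i(A),\Xi_k(\bar A)\}=\Xi_{i+k}([A,\bar A])$ — come out by tracking how the prefactors $t^{j-m}z^m$ in~\eqref{eq:10} and~\eqref{eq:11} and the operators $\mathcal{X}$, $\mathcal{Y}$ act on the seeds; the recursive definitions $\Psi_i^j=\frac{1}{i+1}\{\Omega_{-1}^0,\Psi_{i+1}^j\}$ for $i\le-2$, $\Psi_i^j=\frac{1}{i-1}\{\Omega_1^0,\Psi_{i-1}^j\}$ for $i\ge2$, and their $\Omega$-, $\Xi$- and $\Upsilon$-analogues, reduce the general $(i,k)$ case to a few base cases by induction, while the Jacobi identity for triples built from $\Omega_{\pm1}^0$ provides an independent check.

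I expect the spanning statement to be the main obstacle. The bracket verification is long but mechanical once the explicit formulas are in hand; by contrast, showing that nothing has been overlooked requires controlling all solutions of~\eqref{eq:15} over the genuinely infinite-dimensional covering $\tau^\pm$. The delicate points are the induction on polynomial degree in the nonlocal variables inside $\ell_{\mathcal{E}}^\pm(\phi^\pm)=0$, the identification of the arbitrary functions $A(y,z)$ and $B(x,t)$ as the only source of infinite-dimensionality within a fixed weight, and the complete description of invisible symmetries — precisely the places where the ``voluminous computations'' behind Lemma~\ref{sec:algebr-nonl-symm-lemm-1} recur, and where the case split (by the sign of $i$, and by whether $j$ lies below or above the thresholds $i+2$, $i+1$ that, per Tables~\ref{tab:psy}--\ref{tab:omega}, decide in which of $\tau^-$, $\tau^+$, $\tau^\pm$ a given symmetry actually lives) must be handled with care.
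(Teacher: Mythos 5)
The paper itself omits the proof of this theorem entirely (``due to its extreme length''), saying only that it consists of a number of inductions with explicit computations in their bases; so your plan cannot be checked against the authors' argument, and it must be judged on its own. Your overall architecture --- grade by the weights of Remark~\ref{sec:rem-3}, classify shadows, classify invisible symmetries, then compute brackets by induction anchored at explicitly computed base cases --- is consistent with what the paper hints at for the bracket table. But two steps, as written, have genuine gaps.

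First, the spanning claim. The assertion that the solutions of $\ell_{\mathcal{E}}^\pm(\phi)=0$ are, weight by weight, spanned by the listed shadow families is exactly the hard content of the theorem, and your argument for it is circular in places: the remark after~\eqref{eq:7} that $\mathcal{E}$ admits only point symmetries is a statement about \emph{local} symmetries and does not directly control nonlocal shadows, which a priori are arbitrary smooth functions of infinitely many variables $r_i^{\alpha,\beta}$, $q_i^{\alpha,\beta}$. Before you can ``filter by polynomial degree in the nonlocal variables'' you must show that shadows depend on finitely many of these variables and polynomially so; and after peeling off the top degree, the lower-degree part satisfies an \emph{inhomogeneous} equation with source terms, not the bare equation $\ell_{\mathcal{E}}(\phi)=0$, so the reduction to point symmetries of $\mathcal{E}$ does not apply to it. Moreover a single choice of weights leaves each graded component infinite-dimensional (any function of $y$ and of $t$ has weight zero in your normalization), so ``weight by weight'' does not by itself make the linear problem finite; you would need to exploit several of the four independent gradings simultaneously, or argue differently. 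The classification of invisible symmetries as exactly $\Xi_i(A)$, $i\geq1$, and $\Upsilon_i(B)$, $i\leq-2$, is likewise asserted rather than derived from~\eqref{eq:12} and~\eqref{eq:13}.

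Second, the bracket table. Computing only the $\pd{}{u}$-component of a bracket and then ``invoking uniqueness of lifts up to invisible symmetries'' determines the bracket only \emph{modulo} invisible symmetries. That is not enough to certify several entries of Table~\ref{tab:Lie-alg}: the entry $\{\Upsilon_i(B),\Xi_k(\bar A)\}=0$ and any entry landing on a generator that is itself invisible (e.g.\ $\Xi_{k+i-j}(\cdot)$ with $k+i-j\geq1$) are precisely statements about the invisible part of the answer, which the shadow computation cannot see. For these you must compute the $r$- and $q$-components of the commutator explicitly from the formulas of Proposition~\ref{sec:algebr-nonl-symm-prrop-1} and System~\eqref{eq:15}, not infer them. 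With these two points repaired --- a genuine finiteness-and-classification argument for solutions of~\eqref{eq:15}, and full-component bracket computations where the target is invisible or zero --- your plan would be a viable route to the theorem.
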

\begin{proof}
  The proof is omitted due to its extreme length. It consists of a number of
  inductions with explicit computations in the bases of these inductions. 
\end{proof}

\begin{remark}
  Denote by~$\mathfrak{h}\subset\mathfrak{g}$ the subalgebra spanned by the
  elements~$\Psi_i^j$, $\Omega_i^j$, and~$\Upsilon_i(B)$, and
  let~$\mathfrak{i}(A)\subset\mathfrak{g}$ denote the
  ideal~$\{\Xi_i(A)\}$. Then~$\mathfrak{g}$ is the semi-direct product
  $\mathfrak{h}\ltimes\mathfrak{i}(A)$. In its turn, $\mathfrak{h} =
  \mathfrak{h}_0\ltimes \mathfrak{i}(B)$, where
  \begin{equation*}
    \mathfrak{h}_0=\{\Psi_i^j,\Omega_i^j\},\qquad
    \mathfrak{i}(B)=\{\Upsilon_i(B)\} .
  \end{equation*}
  The structure of~$\mathfrak{h}$ is quite clear. Consider the correspondence
  \begin{equation*}
    \Psi_i^j\mapsto t^{j+1}z^i\pd{}{t},\ j\geq1\quad \Omega_i^j\mapsto
    t^jz^{i+1}\pd{}{z},\ j\geq0\quad \Upsilon_i(B)\mapsto z^iB\pd{}{y},\qquad
    i\in \mathbb{Z}.
  \end{equation*}
  Then we obtain an isomorphism between~$\mathfrak{h}$ and the Lie algebra of
  the corresponding vector fields. The action of~$\mathfrak{h}$
  on~$\mathfrak{i}(A)$ is less conventional (see the last column of
  Table~\ref{tab:Lie-alg}).
\end{remark}

\textbf{Action of the recursion operator.} Let us now describe the action of
the recursion operator~\eqref{eq:16} on the shadows our symmetries. First of
all note that
\begin{equation*}
  \overrightarrow{\mathcal{R}}(0) = \xi_{0,0}(A),\qquad
  \overleftarrow{\mathcal{R}}(0) = \upsilon_{0,0}(B),
\end{equation*}
and thus the action is defined modulo the images of zero. Keeping this in mind
we have
\begin{equation*}
  \xymatrix{
    \dots
    \ar@/^/[r]^-{\overleftarrow{\mathcal{R}}}
    &\psi_{0,-2}^j
    \ar@/^/[l]^-{\overrightarrow{\mathcal{R}}}
    \ar@/^/[r]^-{\overleftarrow{\mathcal{R}}}
    &\psi_{0,-1}^j
    \ar@/^/[l]^-{\overrightarrow{\mathcal{R}}}
    \ar@/^/[r]^-{\overleftarrow{\mathcal{R}}}
    &\psi_{0,0}^j
    \ar@/^/[l]^-{\overrightarrow{\mathcal{R}}}
    \ar@/^/[r]^-{\overleftarrow{\mathcal{R}}}
    &\psi_{0,1}^j
    \ar@/^/[l]^-{\overrightarrow{\mathcal{R}}}
    \ar@/^/[r]^-{\overleftarrow{\mathcal{R}}}
    &\psi_{0,2}^j
    \ar@/^/[l]^-{\overrightarrow{\mathcal{R}}}
    \ar@/^/[r]^-{\overleftarrow{\mathcal{R}}}
    &\dots
    \ar@/^/[l]^-{\overrightarrow{\mathcal{R}}}
    \\
    \dots
    \ar@/^/[r]^-{\overleftarrow{\mathcal{R}}}
    &\omega_{0,-2}^j
    \ar@/^/[l]^-{\overrightarrow{\mathcal{R}}}
    \ar@/^/[r]^-{\overleftarrow{\mathcal{R}}}
    &\omega_{0,-1}^j
    \ar@/^/[l]^-{\overrightarrow{\mathcal{R}}}
    \ar@/^/[r]^-{\overleftarrow{\mathcal{R}}}
    &\omega_{0,0}^j
    \ar@/^/[l]^-{\overrightarrow{\mathcal{R}}}
    \ar@/^/[r]^-{\overleftarrow{\mathcal{R}}}
    &\omega_{0,1}^j
    \ar@/^/[l]^-{\overrightarrow{\mathcal{R}}}
    \ar@/^/[r]^-{\overleftarrow{\mathcal{R}}}
    &\omega_{0,2}^j
    \ar@/^/[l]^-{\overrightarrow{\mathcal{R}}}
    \ar@/^/[r]^-{\overleftarrow{\mathcal{R}}}
    &\dots
    \ar@/^/[l]^-{\overrightarrow{\mathcal{R}}}\\
    \dots
    \ar@/^/[r]^-{\overrightarrow{\mathcal{R}}}
    &\upsilon_{1,0}(B)
    \ar@/^/[r]^-{\overrightarrow{\mathcal{R}}}
    \ar@/^/[l]^-{\overleftarrow{\mathcal{R}}}
    &\upsilon_{0,0}(B)
    \ar@/^/[r]^-{\overrightarrow{\mathcal{R}}}
    \ar@/^/[l]^-{\overleftarrow{\mathcal{R}}}
    &0
    \ar@/^/[r]^-{\overrightarrow{\mathcal{R}}}
    \ar@/^/[l]^-{\overleftarrow{\mathcal{R}}}
    &\xi_{0,0}(A)
    \ar@/^/[r]^-{\overrightarrow{\mathcal{R}}}
    \ar@/^/[l]^-{\overleftarrow{\mathcal{R}}}
    &\xi_{1,0}(A)
    \ar@/^/[r]^-{\overrightarrow{\mathcal{R}}}
    \ar@/^/[l]^-{\overleftarrow{\mathcal{R}}}
    &\dots
    \ar@/^/[l]^-{\overleftarrow{\mathcal{R}}}
  }
\end{equation*}
\begin{remark}
  To conclude, recall that in~\cite{Mor-Ser} an infinite series of pair-wise
  commuting nonlocal symmetries was presented. The
  algebra~$\sym_{\tau^\pm}(\mathcal{E})$ described above contains infinite
  number of such hierarchies. Namely, for any~$i\in\mathbb{Z}$ and~$j\in
  \mathbb{N}$ each of the families
  \begin{equation*}
    \mathbf{\Psi}^j = \{\Psi_i^j\}_{i\in\mathbb{Z}}\text{ and }\mathbf{\Omega}_i =
    \{\Omega_i^j\}^{j\geq0} 
  \end{equation*}
  consists of pair-wise commuting symmetries. In addition, if we fix the
  functions~$A(y,z)$ and~$B(x,t)$ then the families
  \begin{equation*}
    \mathbf{\Xi}(A) = \{\Xi_i(A)\}_{i\in\mathbb{Z}}\text{ and }
    \mathbf{\Upsilon}(B) = \{\Upsilon_i(B)\}_{i\in\mathbb{Z}}
  \end{equation*}
  will possess the same property.
\end{remark}

\section*{Acknowledgments}
\label{sec:acknowledgments}

Computations were supported by the \textsc{Jets} software,~\cite{Jets}. The
work of I.K.\ was partially supported by Russian Foundation for Basic Research
Grant 18-29-10013 and Simons-IUM Fellowship Grant 2020.

\end{document}